\documentclass[a4paper,twoside]{amsart}
\usepackage{amsmath}
\usepackage{amsfonts}
\usepackage{amssymb}
\usepackage{amsthm}
\usepackage{newlfont}
\usepackage{graphicx}
\usepackage{amscd}
\usepackage{young}

\textwidth 6.25in
\textheight 9in
\topmargin -0.3cm
\leftmargin -3cm
\oddsidemargin=0cm
\evensidemargin=0cm
\hfuzz5pt 
\theoremstyle{plain}
\newtheorem{Th}{Theorem}[section]
\newtheorem{Cor}[Th]{Corollary}

\newtheorem{Prop}[Th]{Proposition}
\theoremstyle{definition}
\newtheorem{Def}{Definition}[section]
\newtheorem{Ex}{Example}[section]
\theoremstyle{remark}
\newtheorem*{Rem}{Remark}
\numberwithin{equation}{section}

\newcommand{\DD}{{\mathbb D}}

\newcommand{\ZZ}{{\mathbb Z}}


\setcounter{MaxMatrixCols}{15}



\begin{document}

\title
{Integrability and geometry of the Wynn recurrence}

\author{Adam Doliwa and Artur Siemaszko}

\address{Faculty of Mathematics and Computer Science\\
University of Warmia and Mazury in Olsztyn\\
ul.~S{\l}oneczna~54\\ 10-710~Olsztyn\\ Poland} 
\email{doliwa@matman.uwm.edu.pl,  artur@uwm.edu.pl}

%
\date{}
\keywords{Pad\'{e} approximation, numerical analysis, discrete integrable systems, incidence geometry, quasideterminants, non-commutative rational functions, Fibonacci language, circle packings, discrete analytic functions}
\subjclass[2010]{41A21, 37N30, 37K20, 37K60, 65Q30, 51A20, 42C05}

\begin{abstract}
We show that the Wynn recurrence (the missing identity of Frobenius of the Pad\'{e} approximation theory) can be incorporated into the theory of integrable systems as a reduction of the discrete Schwarzian Kadomtsev--Petviashvili equation. This allows, in particular, to present the geometric meaning of the recurrence as a construction of the appropriately constrained quadrangular set of points.  The interpretation is valid for a projective line over arbitrary skew field what motivates to consider non-commutative Pad\'{e} theory. We transfer the corresponding elements, including the Frobenius identities, to the non-commutative level using the quasideterminants. Using an example of the characteristic series of the Fibonacci language we present an application of the theory to the regular languages. We introduce the non-commutative version of the discrete-time Toda lattice equations together with their integrability structure. Finally, we discuss application of the Wynn recurrence in a different context of the geometric theory of discrete analytic functions.
\end{abstract}
\maketitle

\section{Introduction}

The connection between the theory of integrable systems and numerical algorithms related to the Pad\'{e} approximants~\cite{Brezinski-CFPA,Brezinski,Gragg} was observed many times since the first papers on the subject~\cite{Moser,ChCh,GRP,PGR,PGR-LMP,Hirota-1993}, see also more recent works \cite{NagaiTokihiroSatsuma,BrezinskiHeHuRedivo-ZagliaSun,HeHuSunWeniger,Brezinski-CA-XX,BrezinskiRedivo-Zaglia,NagaoYamada}.According to authors of \cite{CHHL}  \emph{the connection between convergence acceleration algorithms and integrable systems brings a different and fresh look to both domains and could be of benefit to them.} Due to the Hankel (or Toeplitz) structure of determinants used in the Pad\'{e} theory constructions, there exists close connection of the subject with the theory of orthogonal polynomials~\cite{Ismail}. Recently, the relation of partial-skew-orthogonal polynomials to integrable lattices with Pfaffian tau-functions was investigated in~\cite{CHHL-CMP}.

It is well known (see for example recent textbook~\cite{IDS} where the relationship of discrete integrable systems to Pad\'{e} approximants is discussed in detail) that the relation can be described via the discrete-time Toda lattice~\cite{Hirota-2dT} which  is in fact one of the Frobenius identities~\cite{Baker} found long time ago. The subclass of solutions of the discrete-time Toda chain equations relevant in the Pad\'{e} theory is given by restriction to half-infinite chain. In the literature there are known also other special solutions of the equations, like the finite or periodic chain reductions, or infinite chain soliton solutions. Even more developed theory exists for its continuous time version~\cite{Toda-TL,Moser,Berezansky}.

The aforementioned Toda chain equations can be obtained as a particular integrable reduction of the Hirota discrete Kadomtsev--Petviashvili (KP) system~\cite{Hirota}, published initially under the name of \emph{discrete analogue of a generalized Toda equation}. Such a reduction can be applied also on the finite field level~\cite{Bialecki-1DT,BialeckiDoliwa}. We would like to stress that the Hirota system plays distinguished role in the theory of integrable systems and their applications~\cite{Miwa,Shiota,FWN-Capel,KNS-rev}. In particular, it is known that majority of known integrable systems can be obtained as its reductions and/or appropriate continuous limits~\cite{Zabrodin} or subsystems \cite{Doliwa-Des-red,doliwakp}. Moreover, the non-commutative version of the Hirota system~\cite{Nimmo-NCHM}, its $A$-root lattice symmetry structure~\cite{Dol-AN} and geometric interpretation~\cite{Dol-Des} allow to study integrable systems from additional perspectives. The geometric approach to discrete soliton equations initiated in \cite{BP1,DS-AL,BP2,DCN,MQL}, see also \cite{BobSur}, often allows to formulate crucial properties of such systems in terms of incidence geometry statements and provides a meaning for various involved calculations. The integrable systems in non-commutative variables are of special interest recent years~\cite{EGR,Kupershmidt,BobenkoSuris-NC,DMH,RetakhRubtsov,GilsonNimmo,Kondo,LiNimmo,DoliwaKashaev,DoliwaNoumi}. They link, in particular, the classical integrable systems with quantum integrability. 

Having in mind the \emph{substantial practical advantage}~\cite{Baker} of the Wynn recurrence~\cite{Wynn} in the theory of Pad\'{e} approximants, we asked initially about its place within the integrable systems theory. In~\cite{IDS}, in regard for this subject its authors refer to the paper \cite{KR-M-X}, where the recurrence is obtained by application of the reduction group method to the Lax--Darboux schemes associated with nonlinear
Schr\"{o}dinger type equations. We would like also to point  
another recent work~\cite{Kels} where multidimensional consistency approach was applied, among others, to the Wynn recurrence.

Our answer for the question is that the recurrence can be obtained, in full analogy with the derivation of the discrete-time Toda equations from the Hirota system, as a symmetry reduction of the discrete KP equation in its Schwarzian form. Having known the geometric meaning of the discrete Schwarzian KP equation~\cite{DoliwaKosiorek} this result points out towards the corresponding geometric interpretation of the recurrence itself. Because the projective geometric meaning is valid also in the arbitrary skew field case, then one can ask about validity of the recurrence and the Pad\'{e} approximation for the non-commutative series and the corresponding non-commutative rational functions. Fortunately, a substantial part of such theory can be found in the literature~\cite{Draux-rev,Draux-OP-PA,Draux}. Moreover, studies of the non-commutative Pad\'{e} approximants with the help of quasideterminants, which replace determinants in the non-commutative linear algebra, have been  initiated already~\cite{Quasideterminants-GR1,Quasideterminants,NCSF}. We supplement this approach by deriving in such a formalism the non-commutative analogs of the basic Frobenius identities. This allows us to construct the non-commutative version of the discrete-time Toda chain equations together with the corresponding linear problem, what enables to investigate  integrability of the system. We therefore obtain the Wynn recurrence within that full integrability scheme. As an application of the non-commutative Pad\'{e} approximants we studied the characteristic series of the Fibonacci language, which is the paradigmatic example of a regular language. We also found that the same reduction of the discrete Schwarzian KP equation, but in the complex field case~\cite{KoSchief-Men}, was investigated in connection with the theory of circle packings and discrete analytic functions~\cite{Schramm,BobenkoPinkall-DSIS}. Our paper connects therefore two approximation problems, whose relation was not known before. We supplement also previously known generation of the packings by discussion of the generic initial boundary data and related consistency of the construction to the tangential Miquel theorem.

The structure of the paper is as follows. In Section~\ref{sec:q-det} we present the non-commutative Pad\'{e} theory in terms of quasideterminants and we give its application to the characteristic series of the Fibonacci language. In the first part of Section~\ref{sec:q-det-F} we derive, staying still within the quasideterminantal formalism, the non-commutative analogs of basic Frobenius identities. Then we abandon such particular interpretation of the equations and study them within more general context of lattice integrable systems. They become the non-commutative discrete-time Toda chain equations and their linear problem. Section~\ref{sec:nc-W-r} is devoted exclusively to the non-commutative Wynn recurrence. We first show how its solution can be constructed from solutions of the linear problem introduced previously. Then we present its geometric meaning within the context of projective line over arbitrary skew field. We also show how it can be derived as a dimensional dimensional symmetry reduction of the non-commutative discrete Kadomtsev--Petviashvili equation in its Schwarzian form. In the final Section~\ref{sec:W-circle} we present conformal geometry meaning of the complex field version of the Wynn recurrence. 

Throughout the paper we assume that the Reader knows basic elements of the Pad\'{e} approximation theory, as covered for example in the first part of~\cite{Baker}.

\section{Non-commutative Pad\'{e} approximants and quasideterminants} \label{sec:q-det}
Pad\'{e} approximants of series in non-commuting symbols were studied in \cite{Draux-rev,Draux-OP-PA,Draux}. The analogy with the commutative case became even more direct when the quasideterminants~\cite{Quasideterminants-GR1,Quasideterminants}, which replace the standard determinants in the non-commutative linear algebra, have been applied \cite{NCSF} to investigate their properties. We first recall the relevant  properties of the quasideterminants needed in that context. Then we give an application of the theory to study regular languages on example of the characteristic series of the Fibonacci language. 

\subsection{Quasideterminants}
\label{sec:CF-Q}
In this Section we recall, following~\cite{Quasideterminants-GR1}  the definition and basic properties of quasideterminants. 
\begin{Def}
	Given square matrix $X=(x_{ij})_{i,j=1,\dots,n}$ with formal entries $x_{ij}$. In the free division ring~\cite{Cohn} generated by the set $\{ x_{ij}\}_{i,j=1,\dots,n}$ consider the formal inverse matrix $Y=X^{-1}= (y_{ij})_{i,j=1,\dots,n}$ to $X$.
	The $(i,j)$th quasideterminant $|X|_{ij}$ of $X$ is the inverse $(y_{ji})^{-1}$ of the $(j,i)$th element of $Y$.
\end{Def}
Quasideterminants can be computed using the following recurrence relation. For $n\geq 2$ let $X^{ij}$ be the square matrix obtained from $X$ by deleting the $i$th row and the $j$th column (with index $i/j$ skipped from the row/column enumeration), then
\begin{equation} \label{eq:QD-exp}
|X|_{ij} = 
x_{ij} - \sum_{\substack{ i^\prime \neq i \\ j^\prime \neq j }} x_{i j^\prime} (|X^{ij}|_{i^\prime j^\prime })^{-1} x_{i^\prime j}
\end{equation}
provided all terms in the right-hand side are defined.
Sometimes it is convenient to use the following more explicit notation
\begin{equation}
|X|_{ij} = \left| \begin{matrix}
x_{11} & \cdots & x_{1j} & \cdots & x_{1n} \\
\vdots &        & \vdots &        & \vdots \\
x_{i1} & \cdots & \boxed{x_{ij}} & \cdots & x_{in}  \\
\vdots &        & \vdots &        & \vdots \\
x_{n1} & \cdots & x_{nj} & \cdots & x_{nn} 
\end{matrix} \right| .
\end{equation}
\begin{Ex} \label{ex:qd-n=2}
In the simplest case of $n=2$ we have
	\begin{equation} \label{eq:qd-n=2}
	\begin{vmatrix}
	\boxed{x_{11}}& x_{12} \\
	x_{21}& x_{22}	\end{vmatrix} = x_{11} - x_{12} x_{22}^{-1} x_{21} .
	\end{equation} 
\end{Ex}

To study non-commutative Pad\'{e} approximants we will need several properties of quasideterminants, which we list below:
\begin{itemize}
	\item row and column operations,
	\item homological relations,
	\item Sylvester's identity.
\end{itemize}
\subsubsection{Row and column operations}

(i) The quasideterminant $|X|_{ij}$ does not depend on permutations of rows and columns in the matrix $X$ that do not involve the $i$th row and the $j$th column.

(ii) Let the matrix $\tilde{X}$ be obtained from the matrix $X$ by multiplying the $k$th row by the element $\lambda$ of the division ring from the left, then 
	\begin{equation}
	|\tilde{X}|_{ij} = \begin{cases} \lambda|X|_{ij} & \text{if} \quad i = k, \\
	|X|_{ij} & \text{if} \quad i\neq k \quad \text{and} \; \lambda \; \text{is invertible} . \end{cases}
	\end{equation}
	
	(iii) Let the matrix $\hat{X}$ be obtained from the matrix $X$ by multiplying the $k$th column by the element $\mu$ of the division ring from the right, then 
	\begin{equation}
	|\hat{X}|_{ij} = \begin{cases} |X|_{ik} \, \mu & \text{if} \quad j = k, \\
	|X|_{ij} & \text{if} \quad j\neq k \quad \text{and} \; \mu \; \text{is invertible} . \end{cases}
	\end{equation}

(iv) 
Let the matrix $\bar{X}$ is constructed by adding to some row of the matrix $X$ its $k$th row multiplied by a scalar $\lambda$ from the left, then
\begin{equation}
|X|_{ij} = |\bar{X}|_{ij}, \qquad i = 1, \dots , k-1, k+1, \dots , n, \quad j=1,\dots , n.
\end{equation}

 (v) Let the matrix $\check{X}$ is constructed by addition to some column of the matrix $X$ its $l$th column multiplied by a scalar $\mu$ from the right, then
\begin{equation}
|X|_{ij} = |\check{X}|_{ij}, \qquad i = 1, \dots , n, \quad j=1,\dots , l-1, l+1 , \dots ,n.
\end{equation}

\subsubsection{Homological relations}

(i) Row homological relations:
\begin{equation}
-|X|_{ij} \cdot |X^{i k}|_{sj}^{-1} = |X|_{ik} \cdot |X^{ij}|_{sk}^{-1}, \qquad s\neq i.
\end{equation}

(ii) Column homological relations:
\begin{equation} \label{eq:chr}
- |X^{kj}|_{is}^{-1} \cdot |X|_{ij}=  |X^{ij}|_{ks}^{-1} \cdot |X|_{kj} , \qquad s\neq j.
\end{equation}

\subsubsection{Sylvester's identity}

Let $X_0 = (x_{ij})$, $i,j = 1,\dots ,k$, be a submatrix of $X$ that is invertible. For $p,q = k+1,\dots ,n$, set
\begin{equation*}
c_{pq} = \begin{vmatrix}
 &&& x_{1q} \\
& X_0 & & \vdots\\
&&& x_{kq} \\
x_{p1} & \dots & x_{pk} & \boxed{x_{pq}} 
\end{vmatrix} \; ,
\end{equation*}
and consider the $(n-k) \times (n-k)$ matrix $C = (c_{pq})$, $p,q = k+1,\dots , n$. Then for $i,j = k+1,\dots , n$,
\begin{equation}
|X|_{ij} = |C|_{ij} \; .
\end{equation}
In applications Sylvester's identity is usually used in conjunction with row/column permutations.
\begin{Ex}
	Let us take $n=3$ and $k=1$ then using the basic equation~\eqref{eq:qd-n=2} of Example~\ref{ex:qd-n=2} we get
\begin{gather*}
|X|_{33} = \begin{vmatrix}
x_{11}& x_{12} &x_{13} \\
x_{21}& x_{22} & x_{23} \\
x_{31}& x_{32} & \boxed{x_{33}} 
\end{vmatrix} =
\begin{vmatrix}
\begin{vmatrix}
	x_{11} &x_{12} \\
	x_{21}& \boxed{x_{22}} 
	\end{vmatrix} & \begin{vmatrix}
x_{11} &x_{13} \\
x_{21}& \boxed{x_{23} }
\end{vmatrix}  \\ 
\begin{vmatrix}
x_{11} &x_{12} \\
x_{31}& \boxed{x_{32} }
\end{vmatrix} & \boxed{\begin{vmatrix}
x_{11} &x_{13} \\
x_{31}& \boxed{x_{33}}
\end{vmatrix}}
\end{vmatrix}
= \\ 
\begin{vmatrix}
x_{11} &x_{13} \\
x_{31}& \boxed{x_{33}} 
\end{vmatrix} -
\begin{vmatrix}
x_{11} &x_{12} \\
x_{31}& \boxed{x_{32}} 
\end{vmatrix} 
\begin{vmatrix}
x_{11} &x_{12} \\
x_{21}& \boxed{x_{22} }
\end{vmatrix}^{-1} 
\begin{vmatrix}
x_{11} &x_{13} \\
x_{21}& \boxed{x_{23} }
\end{vmatrix}.
\end{gather*} 
\end{Ex}

\subsection{Pad\'{e} approximants of non-commutative series in terms of quasideterminants} \label{sec:ncP}
The results presented below were given in  \cite{NCSF}.
Consider a non-commutative formal series
\begin{equation}
F(t) = S_0 + S_1 t + S_2 t^2 + \dots + S_n t^n + \dots ,
\end{equation}
where the parameter $t$ commutes with the coefficients $S_i$, $i=0,1,2,.\dots $. Set
\begin{equation}
F_k(t) = S_0 + S_1 t + \dots + S_k t^k,
\end{equation}
as its  polynomial of degree $k$ truncation, where also by definition $F_k(t) = 0 $ for $k<0$. Define polynomials 
\begin{equation} \label{eq:PQ-P}
P_{m,n}(t) = \left| \begin{matrix}
\boxed{F_m(t)} & S_{m+1} & \cdots & S_{m+n} \\
t F_{m-1}(t) & S_m     & \cdots & S_{m+n-1}  \\
\vdots       &  \vdots & \ddots & \vdots \\
t^n F_{m-n}(t) & S_{m-n+1} &\cdots & S_m 
\end{matrix} \right| , \quad
Q_{m,n}(t) = \left| \begin{matrix}
\boxed{1}     & S_{m+1} & \cdots & S_{m+n} \\
t & S_m     & \cdots & S_{m+n-1}  \\
\vdots       &  \vdots & \ddots & \vdots \\
t^n  & S_{m-n+1} &\cdots & S_m 
\end{matrix} \right| ,
\end{equation}
of degrees $m$ and $n$ in parameter $t$, respectively. Then the fraction $Q_{m,n}(t)^{-1} P_{m,n}(t) = [m/n]_L$ agrees with $F(t)$ up to terms of order $m+n$ inclusively
\begin{equation} \label{eq:QFP}
Q_{m,n}(t) F(t) = P_{m,n}(t) + O(t^{m+n+1}).
\end{equation}
The proof is based on expanding the left hand side of \eqref{eq:QFP} in powers of $t$ and checking that the coefficients, by formula \eqref{eq:QD-exp}, can be nicely written in terms of certain quasideterminants. The quasideterminants, which multiply $t^k$, $k=0,\dots , m$, coincide with the corresponding coefficients of $P_{m,n}(t)$, while the quasideterminants which multiply $t^k$, $k=m+1,\dots , m+n$,  vanish (their matrices have two columns identical).
\begin{Rem}
There exist analogous quasideterminantal expressions for Pad\'{e} approximants $[m/n]_R$ with denominators on the right side. The matrices of quasideterminants representing the polynomials of the new nominator and denominator are transpose of those given by \eqref{eq:PQ-P}.
\end{Rem}

\subsection{The Fibonacci language} \label{sec:Fib}
In the theory of formal languages one of standard techniques is to investigate series in non-commuting variables~\cite{Salomaa}. In particular, it is known~\cite{Berstel-Reutenauer} that the so called regular languages, i.e. the languages recognized by finite state automata~\cite{Sakarovitch}, give rise to non-commutative rational series. In this Section we consider the Pad\'{e} approximation to the characteristic series of the so called Fibonacci language recovering its rational function representation. The reasoning can be, in principle, transferred to other regular languages.

Consider the language over alphabet $\{ a,b \}$ consisting of words with two consecutive letters $b$ prohibited. Its characteristic series 
\begin{equation} \label{eq:Fib-ser}
F = 1 + a + b + aa + ab + ba + aaa + aab + aba + baa + bab + \dots
\end{equation} 
where $1$ represents the empty word, can be read-out from the corresponding deterministic finite state automaton \cite{Berstel-Reutenauer,Sakarovitch} visualized on Figure~\ref{fig:Fibonacci-a}.
\begin{figure}[h!]
	\includegraphics[width=4cm]{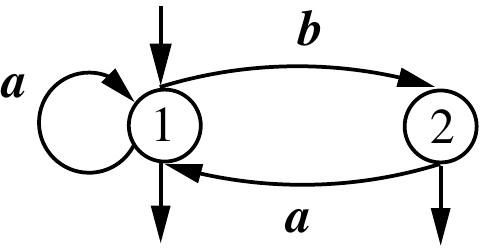} 
	\caption{Deterministic finite state automaton accepting the Fibonacci language}
	\label{fig:Fibonacci-a}
\end{figure}

In terms of the initial state labeled by $1$, the final states $1$ and $2$, and the transition matrix whose elements are given by labels of the edges of the automaton graph, the characteristic series is given by
	\begin{equation}
F = \left( \begin{array}{cc} 1 & 0 \end{array} \right)
\left( \begin{array}{cc} a & b \\ a & 0 \end{array} \right)^* \left( \begin{array}{c} 1  \\ 1 \end{array} \right) ,
\end{equation}
were for a square matrix $M$ its Kleene's star $M^*$ is defined as
\begin{equation}
M^* = I + M + M^2 + M^3 + \dots = (I-M)^{-1}.
\end{equation}
The series can be therefore represented by the corresponding non-commutative rational function expression 
\begin{equation} \label{eq:Fib-rat}
F = [1-(a+ba)]^{-1}(1+b).
\end{equation}

Recall that the number $f_k$ of the Fibonacci words of length $k$ satisfies the well known recurrence relation 
\begin{equation} \label{eq:fib-rec}
f_0 = 1, \quad f_1 = 2, \quad f_{k} = f_{k-1} + f_{k-2}, \qquad k>1.
\end{equation}
The corresponding generating function 
\begin{equation}
f(t) = \sum_{k=0}^\infty f_k t^k,
\end{equation}
which can be found using the recurrence \eqref{eq:fib-rec}, reads
\begin{equation}
f(t) = \frac{1+t}{1-t-t^2} = 1 + 2t + 3 t^2 + 5 t^3 + \dots \, ,
\end{equation} 
and coincides with $[1/2]$ Pad\'{e} approximation of the series. 

\begin{Prop}
	The non-commutative Pad\'{e} approximation technique when applied to the characteristic series of the Fibonacci language \eqref{eq:Fib-ser} allows to recover its non-commutative rational function representation \eqref{eq:Fib-rat}. 
\end{Prop}
\begin{proof}
In order to apply the Pad\'{e} approximation technique in quasideterminantal formalism let us split the characteristic series into homogeneous terms by the transformation $a\mapsto at$, $b\mapsto bt$ what gives the formal series $F(t)$ with coefficients $S_k$ being the formal sum of the Fibonacci words of length $k$. They satisfy the following recurrence relations (which can be also read off from the automaton)
\begin{equation} \label{eq:rec-Fib}
S_0 = 1, \quad S_1 = a+b, \quad S_k = a S_{k-1} + ba S_{k-2} = S_{k-1} a + S_{k-2}ab, \qquad k> 1, 
\end{equation}
i.e., after/before the letter $a$ there can be $a$ or $b$, while after/before the letter $b$ there can be only the letter~$a$.

Using the row and column operations applied to the quasideterminants \eqref{eq:PQ-P} representing the nominator and the denominator of the left Pad\'{e} approximant $[1/2]_L$ we obtain
\begin{equation}
P_{1,2}(t) = 1 + bt, \qquad Q_{1,2}(t) = 1 - at - ba t^2.
\end{equation}
Their ratio agrees with the series $F(t)$ due to the formula (for more details on topology in the space of formal series see~\cite{Sakarovitch})
\begin{equation}
(1 - at - ba t^2)^{-1} (1 + bt) = F_n(t) +
(1 - at - ba t^2)^{-1}(t^{n+1}S_{n+1} + t^{n+2}ba S_n),\quad n\geq 0,
\end{equation}
what can be shown inductively using the recurrence \eqref{eq:rec-Fib}.
\end{proof}

\begin{Cor} Because we started with the rational series then $[m/n]_L = [1/2]_L$ for $m\geq 1$ and $n\geq 2$. In particular, formulas \eqref{eq:PQ-P} give
	\begin{align}
	P_{m,2}(t) = (-1)^{m-1} P_{1,2}(t), & \qquad Q_{m,2}(t) = (-1)^{m-1} Q_{1,2}(t), & &  m \geq 1 ,\\
	P_{1,n}(t) = P_{1,2}(t), &\qquad Q_{1,n}(t) = Q_{1,2}(t) , & &n \geq 2, \\
	P_{m,n} (t) =  0, & \qquad Q_{m,n}(t) = 0 , &  m > 1 \quad & \text{and} \quad n >2. 
	\end{align}
\end{Cor}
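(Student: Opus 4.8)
The plan is to obtain the explicit polynomials directly from the quasideterminantal formulas \eqref{eq:PQ-P} by column operations driven by the Fibonacci recurrence \eqref{eq:rec-Fib}, reading off the identity $[m/n]_L=[1/2]_L$ as a by-product. That the value of the approximant stabilises is just a restatement of $F(t)=(1-at-bat^2)^{-1}(1+bt)$ being rational of left type $(1,2)$: with $\tilde P(t)=1+bt=P_{1,2}(t)$ of degree $1\leq m$ and $\tilde Q(t)=1-at-bat^2=Q_{1,2}(t)$ of degree $2\leq n$ one has $\tilde Q F=\tilde P$, hence a fortiori $\tilde Q F=\tilde P+O(t^{m+n+1})$, so the pair $(\tilde P,\tilde Q)$ already solves the defining relation \eqref{eq:QFP} at level $(m,n)$; by uniqueness of the left Pad\'{e} approximant of a prescribed type, whenever it exists \cite{Draux}, this forces $[m/n]_L=\tilde Q^{-1}\tilde P=[1/2]_L$. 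What remains is the description of the canonical representatives $P_{m,n},Q_{m,n}$ themselves.

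For the boundary family $n=2$ I would induct on $m$. In the $3\times 3$ matrix representing $Q_{m,2}(t)$ the two rightmost ($S$-) columns are, by \eqref{eq:rec-Fib}, tied by $S_{k+2}-S_{k+1}a=S_k\, ab$; subtracting from the third column the second one right-multiplied by $a$, then factoring $ab$ out of the third column on the right, and finally interchanging the two $S$-columns --- none of these touching the first column, hence all of them leaving the $(1,1)$-quasideterminant unchanged by the column rules of Section~\ref{sec:CF-Q} --- converts the matrix of $Q_{m,2}$ into that of $Q_{m-1,2}$. Performing the same three steps on the matrix of $P_{m,2}$, whose first column is $F_m(t),\,tF_{m-1}(t),\,t^2F_{m-2}(t)$, and then absorbing the leftover monomials into the already reduced third column by means of $F_{m-k}(t)=F_{m-1-k}(t)+S_{m-k}t^{m-k}$, produces the matrix of $P_{m-1,2}$. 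Keeping track of the scalar factors released by the column-scaling rule along the way (equivalently, comparing the coefficient of a fixed monomial on both sides) fixes the multiplier as $(-1)^{m-1}$, giving $P_{m,2}=(-1)^{m-1}P_{1,2}$ and $Q_{m,2}=(-1)^{m-1}Q_{1,2}$. The family $m=1$, $n\geq 2$ follows from the analogous reduction, now decreasing $n$ by one through column operations of the same kind; here no scalar accumulates, so $P_{1,n}=P_{1,2}$ and $Q_{1,n}=Q_{1,2}$.

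For the interior $m>1$, $n>2$ the key observation is that every index appearing in the rightmost $S$-column of the matrix of $P_{m,n}$ (and of $Q_{m,n}$) is $\geq m>1$, so \eqref{eq:rec-Fib} applies to it entrywise: that column equals its left neighbour right-multiplied by $a$ plus the next one right-multiplied by $ab$. The corresponding column operation, which again does not involve the first column, turns it into the zero column, so the quasideterminant degenerates and $P_{m,n}=Q_{m,n}=0$. This is precisely the quasideterminantal counterpart of the vanishing of a Hankel determinant of order larger than $q$ built from a sequence obeying a linear recurrence of order $q$, once the window no longer overlaps the initial segment.

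I expect the main obstacle to be the interface between the two regimes. On the $n=2$ side one must keep the book-keeping of the accumulated scalar correct through the induction; dually, on the $m=1$ side one must check that the boundary correction $S_1-S_0a=b$ left over by \eqref{eq:rec-Fib} near the initial segment is exactly what is needed to reconstruct $Q_{1,2}$ rather than $0$, which means the top-left block of the Hankel part has to be treated separately (a short reduction based on Sylvester's identity does this). One should also verify, step by step, that each column operation invoked genuinely falls under the scaling or addition rules of Section~\ref{sec:CF-Q}, and record the standard uniqueness statement used in the first paragraph.
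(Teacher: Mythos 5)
Your overall strategy --- uniqueness of the left Pad\'{e} approximant for the stabilisation $[m/n]_L=[1/2]_L$, column operations driven by $S_k=S_{k-1}a+S_{k-2}ab$ for the explicit polynomials, and annihilation of a column for the interior vanishing --- is exactly the route the paper intends; the paper itself offers nothing beyond the remark that the series is rational, so your write-up is an expansion of the intended argument rather than a deviation from it.

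There is, however, a concrete problem with the sign, which you defer to ``bookkeeping'' without performing it. Every column operation you invoke (adding to a column a right multiple of another column, stripping the invertible right factor $ab$ from a non-boxed column, and transposing the two relevant $S$-columns) leaves the $(1,1)$-quasideterminant \emph{strictly unchanged} by rules (i), (iii) and (v) of Section~\ref{sec:CF-Q}: quasideterminants, unlike determinants, acquire no sign under permutations of columns avoiding the boxed one. Hence your induction yields $Q_{m,2}=Q_{m-1,2}$ and $P_{m,2}=P_{m-1,2}$ on the nose, i.e.\ $Q_{m,2}=Q_{1,2}$ and $P_{m,2}=P_{1,2}$ with no factor $(-1)^{m-1}$. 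The commutative specialisation $a=b=1$ confirms this: $q_{2,2}(t)=-1+t+t^2$ and $\Delta_{2,2}=-1$, so $Q_{2,2}=q_{2,2}/\Delta_{2,2}=1-t-t^2=Q_{1,2}$, not $-Q_{1,2}$. The factor $(-1)^{m-1}$ belongs to the determinantal polynomials $p_{m,2}=P_{m,2}\Delta_{m,2}$ and $q_{m,2}=Q_{m,2}\Delta_{m,2}$ (via $\Delta_{m,2}=(-1)^{m-1}\Delta_{1,2}$ in the commutative case), not to the quasideterminants of \eqref{eq:PQ-P}; so carrying out the coefficient comparison you propose would not ``fix the multiplier as $(-1)^{m-1}$'' but would instead show that either the displayed sign in the Corollary or your derivation must be amended. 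Two smaller points: for $m>1$, $n>2$ your operation annihilates a non-boxed column, which makes the minor $X^{11}$ singular, so $|X|_{11}$ is strictly speaking undefined rather than equal to $0$, and the assertion $P_{m,n}=Q_{m,n}=0$ requires the convention that such a degenerate quasideterminant is set to zero (the $0/0$ of the interior of a commutative Pad\'{e} block); and for the family $m=1$, $n\ge 2$ the reduced last column is $(0,\dots,0,b)^{\mathrm{T}}$ rather than zero, so the Sylvester-type boundary reduction you only sketch is genuinely needed and should be written out.
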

\begin{Rem}
	Similar results can be obtained for right Pad\'{e} approximants of the series, where
	\begin{equation}
	[1/2]_R = (1+bt)(1-at - ab t^2)^{-1}.
	\end{equation}
\end{Rem}
\begin{Rem}
Fibonacci language is a paradigmatic example of a regular (called also rational) language~\cite{Berstel-Reutenauer,Sakarovitch}. The reasoning above applies, in principle, also to other such languages. 
\end{Rem}

\section{Quasideterminantal analogs of the Frobenius identities} \label{sec:q-det-F}
In this Section we derive certain recurrences which will play the role of the Frobenius identities~\cite{Baker}. 
We use the notation introduced in Section~\ref{sec:ncP}. 
\subsection{Non-commutative discrete-time Toda chain equations}

The central role in what follows is played by the following quasideterminant
\begin{equation}
\rho_{m,n} = \left| \begin{matrix}
\boxed{S_m} & S_{m+1} & \cdots & S_{m+n} \\
S_{m-1} & S_m     & \cdots & S_{m+n-1}  \\
\vdots       &  \vdots & \ddots & \vdots \\
S_{m-n} & S_{m-n+1} &\cdots & S_m 
\end{matrix} \right| \; ,
\end{equation}
which in the \emph{commutative} case reduces to the ratio
\begin{equation} \label{eq:rho-D}
\rho_{m,n} = \frac{\Delta_{m,n+1}}{\Delta_{m,n}}, \qquad 
\Delta_{m,n} = \left| \begin{matrix}
S_m & S_{m+1} & \cdots & S_{m+n-1} \\
S_{m-1} & S_m     & \cdots & S_{m+n-2}  \\
\vdots       &  \vdots & \ddots & \vdots \\
S_{m-n+1} & S_{m-n+2} &\cdots & S_m 
\end{matrix} \right| \; ,
\end{equation}
of the determinants, which play in turn central role in the theory of the Pad\'{e} approximants.
\begin{Th}
	The quasideterminants $\rho_{m,n}$ satisfy the nonlinear equation
	\begin{equation} \label{eq:nc-D-T}
	\rho_{m+1,n} \left( \rho_{m,n-1}^{-1} - \rho_{m,n}^{-1} \right) \rho_{m-1,n} = \rho_{m,n+1} - \rho_{m,n},
	\end{equation}
which in the commutative case is a consequence of the Frobenius identity~\cite{Baker}
\begin{equation} \label{eq:D-T}
\Delta_{m,n}^2 = \Delta_{m+1,n} \Delta_{m-1 ,n} + \Delta_{m,n+1} \Delta_{m,n-1}.
\end{equation}
\end{Th}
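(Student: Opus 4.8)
The plan is to reduce \eqref{eq:nc-D-T} to the non-commutative Desnanot--Jacobi (condensation) relation, which is the specialization of Sylvester's identity to a $2\times 2$ matrix $C$, and then to absorb the ``wrong corner'' quasideterminants that appear by means of the homological relations and the Hankel (shift) structure of the underlying matrices. First I would build a single $(n+2)\times(n+2)$ matrix $H$ with entries $h_{ij}=S_{m+j-i}$, so that $\rho_{m,n+1}=|H|_{11}$, and observe that, because $h_{i+1,j+1}=h_{ij}$, every matrix obtained from $H$ by deleting outer rows and columns is again a Hankel matrix of the same form with a shifted value of the index: deleting the first (or the last) row together with the first (or the last) column produces a copy of the size-$(n+1)$ matrix defining $\rho_{m,n}$, deleting the last row and the first column produces the one defining $\rho_{m+1,n}$, deleting the first row and the last column the one defining $\rho_{m-1,n}$, and deleting both outer rows and both outer columns the one defining $\rho_{m,n-1}$. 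Thus all five quasideterminants in \eqref{eq:nc-D-T} live, up to the choice of the boxed position, inside this one matrix and its border minors.

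Next I would apply Sylvester's identity with $X_0$ equal to the central $n\times n$ block of $H$, moved to the top-left corner by an order-preserving relabelling of the middle indices (this keeps $X_0$ Hankel and hence keeps $|X_0|=\rho_{m,n-1}$). The resulting matrix $C$ is $2\times 2$; its four entries $c_{pq}$, with $p,q$ ranging over the two outer indices, are bordered quasideterminants of $X_0$ which, by the shift identification above together with the permutation-invariance property of quasideterminants, are quasideterminants of the size-$(n+1)$ Hankel matrices attached to $m$ and to $m\pm 1$; and Sylvester's identity together with the $2\times 2$ instance of the expansion \eqref{eq:QD-exp} gives a relation of the form $\rho_{m,n+1}=c_{11}-c_{12}\,c_{22}^{-1}\,c_{21}$. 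Choosing the relabelling so that $c_{11}=\rho_{m,n}$, this already has the shape of \eqref{eq:nc-D-T}, but the factors $c_{12},c_{22},c_{21}$ come out as quasideterminants of the same Hankel matrices taken at a corner other than $(1,1)$, rather than as $\rho_{m+1,n}$, $\rho_{m,n}$, $\rho_{m-1,n}$ themselves.

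The main work — and the step I expect to be the real obstacle — is then to rewrite these off-corner quasideterminants in terms of the standard $\rho$'s. This is exactly what the row and column homological relations provide: each expresses $|Y|_{ij}$ in terms of $|Y|_{i1}$ (resp.\ $|Y|_{1j}$) and a quasideterminant of a one-smaller submatrix of $Y$, and the Hankel structure guarantees that those smaller quasideterminants are again of the form $\rho_{m',n'}$; this is also where the second term, carrying $\rho_{m,n}^{-1}$ and $\rho_{m,n-1}^{-1}$, enters. The delicate point is purely non-commutative bookkeeping: since products cannot be reordered one must apply each homological relation from the correct side and keep track of which index has become ``first'' after the relabelling, so that after substitution the three-term relation collapses to precisely the symmetric form $\rho_{m+1,n}\bigl(\rho_{m,n-1}^{-1}-\rho_{m,n}^{-1}\bigr)\rho_{m-1,n}=\rho_{m,n+1}-\rho_{m,n}$ and not to some equivalent but differently-ordered expression. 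All the inverses that occur exist because everything is computed in the free division ring generated by the $S_i$, so the generic, and hence the Pad\'e-relevant, situation is covered by specialization.

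Finally I would record the stated commutative reduction: writing $\rho_{m,n}=\Delta_{m,n+1}/\Delta_{m,n}$ and using the Frobenius identity \eqref{eq:D-T} at level $n$ turns the left-hand side of \eqref{eq:nc-D-T} into $-\Delta_{m+1,n+1}\Delta_{m-1,n+1}$ divided by $\Delta_{m,n}\Delta_{m,n+1}$, while using \eqref{eq:D-T} at level $n+1$ turns the right-hand side into the same quantity; so the two sides agree, and this routine computation pins down the normalization of the non-commutative statement.
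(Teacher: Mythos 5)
Your setup coincides with the paper's: the bordered $(n+2)\times(n+2)$ Hankel matrix, Sylvester's identity with the central $n\times n$ block as pivot, the identification of the four border entries $c_{pq}$ with quasideterminants of the size-$(n+1)$ Hankel matrices attached to $m$ and $m\pm1$ boxed at the four corners, and the commutative reduction at the end (which you carry out correctly). The gap is exactly at the step you flag as ``the real obstacle'', and it is not mere non-commutative bookkeeping. The expansion of $|C|_{11}$ gives $\rho_{m,n+1}=\rho_{m,n}-c_{12}\,c_{22}^{-1}\,c_{21}$, a \emph{single} triple product on the right, whereas \eqref{eq:nc-D-T} requires the \emph{difference} of two triple products, $\rho_{m+1,n}\rho_{m,n-1}^{-1}\rho_{m-1,n}-\rho_{m+1,n}\rho_{m,n}^{-1}\rho_{m-1,n}$. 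The homological relations are purely multiplicative identities of the form $-A\,B^{-1}=C\,D^{-1}$; applying them to the individual factors $c_{12},c_{22},c_{21}$ can never split one product into a sum of two. Moreover, your claim that one homological relation rewrites an off-corner quasideterminant as a standard $\rho_{m',n'}$ is inaccurate: applied to the top-right-boxed quasideterminant of the Hankel matrix for $m+1$ it yields $-\rho_{m+1,n}\,\rho_{m,n-1}^{-1}$ times the top-right-boxed quasideterminant of the one-size-smaller Hankel matrix for $m+1$, so the off-corner objects regress rather than disappear, and even after telescoping you would still be left with a single product, not a difference.

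The paper supplies the missing additive input by expanding the $2\times2$ Sylvester matrix $C$ at a \emph{second} corner: besides $|X|_{1,1}=\rho_{m,n+1}$ it also computes $|X|_{n+2,1}$ (the bottom-left-boxed quasideterminant of the big matrix), whose expansion $|X|_{n+2,1}=c_{21}-c_{22}\,c_{12}^{-1}\,c_{11}$ is a second, genuinely additive relation among the same four border entries. Eliminating the combination $c_{12}\,c_{22}^{-1}$ between the two expansions leaves a relation involving only corner-boxed quantities of two types, and a single column homological relation \eqref{eq:chr} applied to the big matrix then trades the remaining bottom-left-boxed quasideterminants for the standard top-left-boxed ones, producing \eqref{eq:nc-D-T}. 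The concrete fix to your argument is therefore to add this second Sylvester expansion; without it the scheme cannot close.
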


\begin{Rem}
	In the theory of integrable systems equation \eqref{eq:D-T} is called the discrete-time Toda chain equation~\cite{Hirota-2dT} in bilinear form.
\end{Rem}

\begin{proof}
	For the purpose of the proof denote $\rho^1_{m,n} = \rho_{m,n}$, while 
\begin{equation*}
\rho^2_{m,n} = \left| \begin{matrix}
S_m  & \cdots & \boxed{S_{m+n}} \\
\vdots       & \ddots & \vdots \\
S_{m-n} &\cdots & S_m 
\end{matrix} \right| \; ,
\quad
\rho^3_{m,n} = \left| \begin{matrix}
S_m  & \cdots & S_{m+n} \\
\vdots       & \ddots & \vdots \\
\boxed{S_{m-n}} &\cdots & S_m 
\end{matrix} \right| \; , \quad
\rho^4_{m,n} = \left| \begin{matrix}
S_m  & \cdots & S_{m+n} \\
\vdots       & \ddots & \vdots \\
S_{m-n} &\cdots & \boxed{S_m }
\end{matrix} \right| \; .
\end{equation*}
Application of Sylvester's identity to quasideterminants of the matrix 
\begin{equation*}
X =  \begin{pmatrix}
S_m & S_{m+1} & \cdots & S_{m+n+1} \\
S_{m-1} & S_m     & \cdots & S_{m+n}  \\
\vdots       &  \vdots & \ddots & \vdots \\
S_{m-n-1} & S_{m-n} &\cdots & S_m 
\end{pmatrix} \; ,
\end{equation*}
with respect to rows $p_1 = 1$, $p_2 = n+2$, and columns 
$q_1 = 1$, $q_2 = n+2$, gives (among others)
\begin{align} \label{eq:rho-1234}
\rho^1_{m,n+1} = & \rho^1_{m,n} - \rho^2_{m+1 , n} \left( \rho^4_{m,n} \right)^{-1} \rho^3_{m-1,n} ,\\
\rho^3_{m,n+1} = & \rho^3_{m-1,n} - \rho^4_{m,n} \left( \rho^2_{m+1,n} \right)^{-1} \rho^1_{m,n}.
\end{align}
Eliminating $\rho^2_{m+1 , n} \left( \rho^4_{m,n} \right)^{-1} $ from the above equations we obtain
\begin{equation} \label{eq:rho-1-3}
\rho^3_{m,n+1} = \rho^3_{m-1,n} \left( \rho^1_{m,n+1} - \rho^1_{m,n} \right)^{-1} \rho^1_{m,n+1} .
\end{equation}

From the other hand, the column homological relations \eqref{eq:chr} when applied to $X$ and indices $i=n+2$, $j=1$, $k=1$ and $s=2$ give
\begin{equation} \label{eq:chr-1-3}
-\left( \rho^3_{m,n}\right)^{-1} \rho^3_{m,n+1} = \left( \rho^1_{m+1,n}\right)^{-1} \rho^1_{m,n+1}.
\end{equation}
Elimination of $\rho^3_{m,n}$ from equations \eqref{eq:rho-1-3} and \eqref{eq:chr-1-3} leads to the non-commutative discrete-time Toda chain equation~\eqref{eq:nc-D-T}. 

Finally, by substituting in the commutative case expression \eqref{eq:rho-D} into equation~\eqref{eq:nc-D-T} we obtain
\begin{equation*}
 \frac{ \Delta_{m,n+1} \Delta_{m,n-1} - \Delta_{m,n}^2}{\Delta_{m+1,n} \Delta_{m-1 ,n}} =
 \frac{ \Delta_{m,n+2} \Delta_{m,n} - \Delta_{m,n+1}^2}{\Delta_{m+1,n+1} \Delta_{m-1 ,n+1}} .
\end{equation*}
which is immediate consequence of the discrete-time Toda chain equation~\eqref{eq:D-T}.
\end{proof}
\begin{Cor}
Elimination of $\rho^1_{m,n}$ from equations \eqref{eq:rho-1-3} and \eqref{eq:chr-1-3} leads to another version of the non-commutative discrete-time Toda chain equation
\begin{equation}
	\rho^3_{m,n} \left( (\rho^3_{m,n-1})^{-1} - (\rho^3_{m+1,n})^{-1} \right) \rho^3_{m,n} = \rho^3_{m,n+1} - \rho^3_{m-1,n}.
\end{equation}	
After the following change of variables
\begin{equation*}
k=-m, \qquad l = m+n, \qquad \rho^3_{m,n} = H^{(l)}_k,
\end{equation*}	
we get the equation
\begin{equation}
H^{(l)}_k \left[ (H^{(l-1)}_k)^{-1} - (H^{(l+1)}_{k-1})^{-1} \right] H^{(l)}_k = H^{(l+1)}_{k} - H^{(l-1)}_{k+1},
\end{equation}
obtained in \cite{Shi-HaoLi} in the context of the theory of matrix orthogonal polynomials.
\end{Cor}
\begin{Rem}
	Also other quasideterminants $\rho^i_{m,n}$, $i = 2,3,4$, in the commutative case reduce to ratios of the  determinants $\Delta_{m,n}$, in particular
	\begin{equation}
	\rho^3_{m,n} = (-1)^{n+2} \; \frac{\Delta_{m,n+1}}{\Delta_{m+1,n}}.
	\end{equation}
\end{Rem}
\begin{Ex}
	Going back to the Fibonacci language of Section~\ref{sec:Fib} one can check that in such case
	$\rho_{1,1} = b^2(a+b)^{-1}$, $\rho_{1,2+\ell} = \rho_{1,2} = b$ for $\ell\geq 0$, and $\rho_{1+k,1+\ell}= 0$ for $k>0$ and $\ell>0$.
\end{Ex}

\subsection{Non-commutative analogs of Frobenius identities involving the polynomials}
Other Frobenius identities of the classical commutative theory involve the nominator $p_{m,n}(t)$ or denominator $q_{m,n}(t)$ of the Pad\'{e} approximants
\begin{equation}
 p_{m,n}(t) = \left| \begin{matrix}
F_m(t) & S_{m+1} & \cdots & S_{m+n} \\
t F_{m-1}(t) & S_m     & \cdots & S_{m+n-1}  \\
\vdots       &  \vdots & \ddots & \vdots \\
t^n F_{m-n}(t) & S_{m-n+1} &\cdots & S_m 
\end{matrix} \right| , 
  \quad q_{m,n}(t) = \left| \begin{matrix}
1    & S_{m+1} & \cdots & S_{m+n} \\
t & S_m     & \cdots & S_{m+n-1}  \\
\vdots       &  \vdots & \ddots & \vdots \\
t^n  & S_{m-n+1} &\cdots & S_m 
\end{matrix} \right| .
\end{equation}
Their relation to the previous expressions follows from definition of the quasideterminants (for commuting symbols) and are given by
\begin{equation}
p_{m,n}(t) = P_{m,n}(t) \Delta_{m,n}, \qquad
q_{m,n}(t) = Q_{m,n}(t) \Delta_{m,n}.
\end{equation}
Below we present the non-commutative variants of the Frobenius identities 
\begin{align}
\Delta_{m,n+1} w_{m+1,n+1}(t) -  \Delta_{m+1,n+1} w_{m,n+1}(t) & = t \Delta_{m+1,n+2} w_{m,n}(t), \\
\Delta_{m,n+1} w_{m+1,n}(t) -  \Delta_{m+1,n}w_{m,n+1}(t) & = t \Delta_{m+1,n+1} w_{m,n}(t),
\end{align}
satisfied by any linear combination $w_{m,n}(t)$ of $p_{m,n}(t)$ and $q_{m,n}(t)$.
\begin{Th}
	For arbitrary non-commutative formal power series $\lambda(t)$, and $\mu(t)$ the (right) linear combination 
	\begin{equation}
	W_{m,n}(t) = P_{m,n}(t) \lambda(t) + Q_{m,n}(t) \mu(t),
	\end{equation}
satisfies equations
\begin{align} \label{eq:F-W-1t}
W_{m+1,n+1}(t) - W_{m,n+1}(t)  & =  t \rho_{m+1,n+1}\rho^{-1}_{m,n} W_{m,n}(t),\\
\label{eq:F-W-2t}
W_{m+1,n}(t) - W_{m,n+1}(t) & =  t \rho_{m+1,n}\rho^{-1}_{m,n} W_{m,n}(t) .
\end{align}
\end{Th}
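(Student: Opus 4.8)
The plan is to lift the proof of the bilinear Toda equation~\eqref{eq:nc-D-T} from the level of $\rho$ to the level of the boxed column. First I would observe that, by the column scaling rule~(iii) and by linearity of a quasideterminant in its boxed column, $W_{m,n}(t)$ is itself a single quasideterminant: it is obtained from the matrix of $\rho_{m,n}$ by replacing its first column $(S_m,\dots,S_{m-n})^T$ with $(w_0,\dots,w_n)^T$, where $w_i=t^i\bigl(F_{m-i}(t)\lambda(t)+\mu(t)\bigr)$. Since $\lambda$ and $\mu$ occur on the right while the coefficients $t\rho_{m+1,n+1}\rho^{-1}_{m,n}$ and $t\rho_{m+1,n}\rho^{-1}_{m,n}$ multiply $W_{m,n}$ on the left, it is in fact enough to establish \eqref{eq:F-W-1t}--\eqref{eq:F-W-2t} for $W=P$ and for $W=Q$ separately; keeping $\boldsymbol w$ generic, one only has to remember that the boxed column depends on $m$ through $F_{m+1-i}-F_{m-i}=S_{m+1-i}t^{m+1-i}$.

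Next I would apply Sylvester's identity to the $(n+2)\times(n+2)$ matrix whose $(1,1)$-quasideterminant is $W_{m+1,n+1}(t)$, choosing as interior block the Toeplitz matrix of $\rho_{m+1,n-1}$ (rows and columns $2,\dots,n+1$). This yields a $2\times2$ matrix $C$ of corner quasideterminants; reading off $|C|_{1,1}$ and $|C|_{n+2,1}$ gives two scalar relations. By row and column rearrangements — harmless, since $t$ is central, so the boxed column, scaled by a power of $t$, can be cycled back into first position — the four entries of $C$ are recognised as $W_{m+1,n}$, $\rho^2_{m+2,n}$, $\rho^4_{m+1,n}$, and $t$ times the quasideterminant of the matrix of $W_{m,n}$ with the box displaced to the bottom of its first column. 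A column homological relation~\eqref{eq:chr} for the latter matrix, together with its specialisation to $W=\rho$ (which gives $\rho^3_{m,n-1}\rho^{-1}_{m+1,n-1}=-\rho^3_{m,n}\rho^{-1}_{m,n}$), rewrites that displaced quasideterminant as $\rho^3_{m,n}\rho^{-1}_{m,n}W_{m,n}$.

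Substituting these identifications into $|C|_{1,1}$ and using the already proven identity $\rho_{m+1,n+1}=\rho_{m+1,n}-\rho^2_{m+2,n}(\rho^4_{m+1,n})^{-1}\rho^3_{m,n}$ (namely~\eqref{eq:rho-1234} shifted by $m\mapsto m+1$) collapses the first Sylvester relation to $W_{m+1,n+1}-W_{m+1,n}=t(\rho_{m+1,n+1}-\rho_{m+1,n})\rho^{-1}_{m,n}W_{m,n}$, which is the difference of \eqref{eq:F-W-1t} and \eqref{eq:F-W-2t}. To disentangle the two I would produce a second, independent relation — e.g. by the analogous Sylvester reduction applied to the matrix of $W_{m,n+1}(t)$ (with interior block the matrix of $\rho_{m,n-1}$), or by running $|C|_{n+2,1}$ through the homological identities \eqref{eq:chr-1-3} and \eqref{eq:rho-1-3} — and then solve the pair of linear relations for $W_{m+1,n+1}-W_{m,n+1}$ and $W_{m+1,n}-W_{m,n+1}$. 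Specialising to the commutative case via \eqref{eq:rho-D} and comparing with \eqref{eq:D-T}, exactly as at the close of the previous proof, serves as a consistency check on the final form.

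The step I expect to be the main obstacle is the bookkeeping: tracking the shifted indices of the $\rho^i$, the position of the box after each rearrangement, and the accompanying powers of $t$, and — in the case $W=P$, where the boxed column genuinely varies with $m$ — checking that the correction terms proportional to $t^{m+1}\rho$ arising from $F_{m+1-i}-F_{m-i}=S_{m+1-i}t^{m+1-i}$ cancel exactly. Once everything is written in terms of the four corner quasideterminants, the final collapse to \eqref{eq:F-W-1t}--\eqref{eq:F-W-2t} is forced entirely by the Frobenius identities \eqref{eq:rho-1234}--\eqref{eq:chr-1-3} and the bilinear Toda equation \eqref{eq:nc-D-T}.
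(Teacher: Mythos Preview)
Your overall strategy --- Sylvester's identity on the $(n{+}2)\times(n{+}2)$ matrix of $W_{m+1,n+1}$, followed by column homological relations --- is the paper's, and your first reduction correctly gives
\[
W_{m+1,n+1}-W_{m+1,n}=t(\rho_{m+1,n+1}-\rho_{m+1,n})\rho_{m,n}^{-1}W_{m,n},
\]
the difference of the two target identities. The gap is in the second step. Option~(a), Sylvester on the matrix of $W_{m,n+1}$ with interior block $\rho_{m,n-1}$ (hence outer columns $1$ and $n{+}2$), yields $W_{m,n+1}=W_{m,n}+t(\rho_{m,n+1}-\rho_{m,n})\rho_{m-1,n}^{-1}W_{m-1,n}$, which is the \emph{same} relation shifted by $m\mapsto m-1$: the bottom-left corner there carries $W_{m-1,n}$, not $W_{m+1,n}$. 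Option~(b), reading off $|C|_{n+2,1}$, is no better: once $|X|_{n+2,1}$ is rewritten via the column homological relation as a $\rho$-multiple of $W_{m+1,n+1}$, substitution of the first relation collapses the second to a pure $\rho$-identity (a row homological relation among $\rho^1,\rho^2,\rho^3,\rho^4$). A single three-term recurrence among $W_{m+1,n+1},W_{m+1,n},W_{m,n}$ cannot separate \eqref{eq:F-W-1t} from \eqref{eq:F-W-2t}.

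The idea you are missing is the one you flag only as a bookkeeping obstacle. The paper proves as a lemma (equation~\eqref{eq:W-k-P}) that $W_{m,n}$ is unchanged when the first-column entries $C_{m-j}$ are replaced by $C_{m-j+i}$ for any $i\in\{0,\dots,n\}$, since each successive difference is $t^{m+1+i}$ times one of the later $S$-columns. This is not a cancellation to be verified after the fact; it is the device that produces the second relation. Shift the first column of the $W_{m,n+1}$ matrix up by $i=1$, and then apply Sylvester with outer columns $1$ and $2$ (so the interior block is $\rho_{m+1,n-1}$, not $\rho_{m,n-1}$). Now the $(1,1)$ corner is literally $W_{m+1,n}$, and one obtains a genuinely new three-term relation among $W_{m,n+1},W_{m+1,n},W_{m,n}$, which after the homological reductions \eqref{eq:chr-1-3} is exactly \eqref{eq:F-W-2t}; subtraction from the first relation then gives \eqref{eq:F-W-1t}.
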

\begin{proof}
We will show the result for $W_{m,n}(t)$ being either $P_{m,n}(t)$ or $Q_{m,n}(t)$, because then the conclusion follows from linearity of the equations. We can write
\begin{equation} \label{eq:W-P}
W_{m,n}(t) = \left| \begin{matrix}
\boxed{C_m(t)} & S_{m+1} & \cdots & S_{m+n} \\
t C_{m-1}(t) & S_m     & \cdots & S_{m+n-1}  \\
\vdots       &  \vdots & \ddots & \vdots \\
t^n C_{m-n}(t) & S_{m-n+1} &\cdots & S_m 
\end{matrix} \right|,
\end{equation}
where  $C_k(t) = F_k(t)$ when $W_{m,n}(t) = P_{m,n}(t)$, and $C_k(t) = 1$ when $W_{m,n}(t) = Q_{m,n}(t)$, for all $k\in\ZZ$. The possibility of applying the same proof follows form the observation that in both cases we have 
\begin{equation} \label{eq:W-k-P}
W_{m,n}(t) = \left| \begin{matrix}
\boxed{C_{m+i}(t)} & S_{m+1} & \cdots & S_{m+n} \\
t C_{m-1+i}(t) & S_m     & \cdots & S_{m+n-1}  \\
\vdots       &  \vdots & \ddots & \vdots \\
t^n C_{m-n+i}(t) & S_{m-n+1} &\cdots & S_m 
\end{matrix} \right|, \qquad i = 0, \dots , n.
\end{equation}
For the denominators this is trivial, and for the nominators it can be shown by using column operations.

Application of Sylvester's identity to the matrix 
\begin{equation}
 \begin{pmatrix}
C_{m+1}(t) & S_{m+2} & \cdots & S_{m+n+2} \\
t C_{m}(t) & S_{m+1}    & \cdots & S_{m+n+1}  \\
\vdots       &  \vdots & \ddots & \vdots \\
t^{n+2} C_{m-n}(t) & S_{m-n+1} &\cdots & S_{m+1} 
\end{pmatrix} ,
\end{equation}
with rows $p_1 = 1$, $p_2 = n+2$ and columns $q_1 = 1$, $q_2 = n+2$ gives
\begin{equation}
W_{m+1,n+1}(t) =  W_{m+1,n}(t)  + t \rho^2_{m+2,n}(\rho^{4}_{m+1,n})^{-1} \rho^3_{m,n-1} (\rho^1_{m+1,n-1})^{-1} W_{m,n}(t),
\end{equation}
where the homological column relations were also used. Equation~\eqref{eq:rho-1234} allows then to write
\begin{equation} \label{eq:W-F-1-long}
W_{m+1,n+1}(t) =  W_{m+1,n}(t)  - t (\rho^1_{m+1,n+1} - \rho^{1}_{m+1,n}) (\rho^3_{m,n})^{-1} \rho^3_{m,n-1} (\rho^1_{m+1,n-1})^{-1} W_{m,n}(t).
\end{equation} 

Applying in turn Sylvester's identity to the matrix 
\begin{equation}
\begin{pmatrix}
C_{m+1}(t) & S_{m+1} & \cdots & S_{m+n+1} \\
t C_{m}(t) & S_{m}    & \cdots & S_{m+n}  \\
\vdots       &  \vdots & \ddots & \vdots \\
t^{n+1} C_{m-n}(t) & S_{m-n} &\cdots & S_{m} 
\end{pmatrix} ,
\end{equation}
with rows $p_1 = 1$, $p_2 = n+2$ and columns $q_1 = 1$, $q_2 = 2$ we obtain
\begin{equation} \label{eq:W-F-2}
W_{m,n+1}(t) =  W_{m+1,n}(t)  + t \rho^1_{m+1,n}(\rho^{3}_{m,n})^{-1} \rho^3_{m,n-1} (\rho^1_{m+1,n-1})^{-1} W_{m,n}(t),
\end{equation}
where apart from the homological column relations we used also identity~\eqref{eq:W-k-P} for $i=1$. Subtracting equation \eqref{eq:W-F-2} from \eqref{eq:W-F-1-long} we obtain
\begin{equation} \label{eq:W-F-3}
W_{m+1,n+1}(t) =  W_{m,n+1}(t)  - t \rho^1_{m+1,n+1} (\rho^3_{m,n})^{-1} \rho^3_{m,n-1} (\rho^1_{m+1,n-1})^{-1} W_{m,n}(t).
\end{equation} 
Finally, with the help of the homological relations \eqref{eq:chr-1-3} equations \eqref{eq:W-F-3} and \eqref{eq:W-F-2} can be brought to the form of \eqref{eq:F-W-1t} and \eqref{eq:F-W-2t}, respectively.
\end{proof}
\begin{Cor}
	Equation \eqref{eq:W-F-1-long}, which can be brought to the form
\begin{equation} \label{eq:W-F-1}
W_{m+1,n+1}(t) =  W_{m+1,n}(t)  + t (\rho_{m+1,n+1} - \rho_{m+1,n}) \rho_{m,n}^{-1} W_{m,n}(t),
\end{equation} 
is the non-commutative variant of the Frobenius identity 
\begin{equation}
\Delta_{m+1,n} w_{m+1,n+1}(t) -  \Delta_{m+1,n+1} w_{m+1,n}(t)  = - t \Delta_{m+2,n+1} w_{m,n}(t).
\end{equation}	
\end{Cor}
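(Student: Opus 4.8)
The statement has two parts, and I would treat them separately. The first is to pass from \eqref{eq:W-F-1-long} to the compact form \eqref{eq:W-F-1}; the second is to recover the classical Frobenius identity in the commutative limit. For the first part, comparing the two displays shows that all that is needed is the identity
\[
-(\rho^3_{m,n})^{-1}\,\rho^3_{m,n-1}\,(\rho^1_{m+1,n-1})^{-1}=\rho_{m,n}^{-1},
\]
equivalently, after taking inverses and recalling $\rho_{m,n}=\rho^1_{m,n}$, that $-\rho^1_{m+1,n-1}(\rho^3_{m,n-1})^{-1}\rho^3_{m,n}=\rho^1_{m,n}$. But this is precisely the column homological relation \eqref{eq:chr-1-3} with $n$ replaced by $n-1$ --- the derivation of \eqref{eq:chr-1-3} from \eqref{eq:chr} is translation invariant, so it is valid at shifted indices --- after left multiplication by $\rho^1_{m+1,n-1}$. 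Inserting this into \eqref{eq:W-F-1-long} turns the product $(\rho^3_{m,n})^{-1}\rho^3_{m,n-1}(\rho^1_{m+1,n-1})^{-1}$ into $-\rho_{m,n}^{-1}$ and produces \eqref{eq:W-F-1}. The point requiring care is the left/right placement of the non-commutative factors: the replacement must be made on the correct side of the prefactor $\rho_{m+1,n+1}-\rho_{m+1,n}$, which is why one inverts \eqref{eq:chr-1-3} rather than using it directly.

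For the commutative reduction I would use the dictionary $\rho_{m,n}=\Delta_{m,n+1}/\Delta_{m,n}$ from \eqref{eq:rho-D}, together with $w_{m,n}(t)=W_{m,n}(t)\,\Delta_{m,n}$, which holds for $P_{m,n}$ and $Q_{m,n}$ and hence for any linear combination. Substituting $W_{m,n}(t)=w_{m,n}(t)/\Delta_{m,n}$ into \eqref{eq:W-F-1}, the coefficient of $w_{m,n}(t)$ collapses to
\[
t\,\frac{\Delta_{m+1,n+2}\Delta_{m+1,n}-\Delta_{m+1,n+1}^2}{\Delta_{m+1,n+1}\,\Delta_{m+1,n}\,\Delta_{m,n+1}}.
\]
The numerator is rewritten using the discrete-time Toda identity \eqref{eq:D-T} with $m\mapsto m+1$ and $n\mapsto n+1$, which gives $\Delta_{m+1,n+2}\Delta_{m+1,n}-\Delta_{m+1,n+1}^2=-\Delta_{m+2,n+1}\Delta_{m,n+1}$; the factor $\Delta_{m,n+1}$ then cancels, and clearing the remaining denominators $\Delta_{m+1,n+1}\Delta_{m+1,n}$ yields
\[
\Delta_{m+1,n}\,w_{m+1,n+1}(t)-\Delta_{m+1,n+1}\,w_{m+1,n}(t)=-t\,\Delta_{m+2,n+1}\,w_{m,n}(t),
\]
which is the asserted Frobenius identity. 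This step is routine rational-function manipulation whose only nontrivial ingredient is the single use of \eqref{eq:D-T}.

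The proof carries essentially no conceptual difficulty beyond what was already used for the preceding theorem; the only thing to watch is bookkeeping --- selecting the correctly shifted instances of \eqref{eq:chr-1-3} and \eqref{eq:D-T}, and respecting the order of the non-commutative factors when converting \eqref{eq:W-F-1-long} to \eqref{eq:W-F-1}.
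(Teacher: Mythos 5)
Your proposal is correct and follows exactly the route the paper intends (the Corollary is stated without a written proof, but it relies on the same ingredients already used in the preceding theorem): the shifted column homological relation \eqref{eq:chr-1-3} collapses $-(\rho^3_{m,n})^{-1}\rho^3_{m,n-1}(\rho^1_{m+1,n-1})^{-1}$ to $\rho_{m,n}^{-1}$, and the dictionary \eqref{eq:rho-D} together with one application of \eqref{eq:D-T} yields the classical Frobenius identity. Both steps, including the index shifts and the ordering of the non-commutative factors, check out.
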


\subsection{Integrability of the non-commutative discrete time Toda chain equations}
Our approach will be typical to analogous works in the theory of integrable systems. After having derived several identities satisfied by the quasideterminants used to find Pad\'{e} approximants, we abandon such a specific interpretation and consider the equations within more general context of non-commutative integrable systems. 

Motivated by interpretation of the Frobenius identities in the commutative case as the discrete-time Toda chain equations~\cite{Hirota-1993} and the corresponding spectral problem, let us devote this Section to presentation of the non-commutative version of the equation in the formalism known from applications to $\varepsilon$-algorithm~\cite{NagaiTokihiroSatsuma} or orthogonal polynomials~\cite{PGR-LMP}. In the context of non-commutative continued fractions and the corresponding LR-algorithm a non-commutative system of such form was obtained by Wynn~\cite{Wynn-NCF}, while its non-autonomous  generalization for double-sided non-commutative continued fractions was given in \cite{Doliwa-NCCF}.

\begin{Prop} \label{prop:WDE}
	The compatibility of the linear system 
	\begin{align} \label{eq:lin-W1}
	W_{m+1,n+1}(t) & = W_{m,n+1}(t) + tD_{m,n}W_{m,n}(t), \\
	\label{eq:lin-W2}
	W_{m+1,n}(t) & = W_{m,n+1}(t) + t E_{m,n}W_{m,n}(t),
	\end{align} is provided by equations
	\begin{gather} \label{eq:ncT-1}
	D_{m+1,n} E_{m,n} = E_{m+1,n+1} D_{m,n} \\
	\label{eq:ncT-2}
	D_{m+1,n-1} + E_{m,n} = D_{m,n} + E_{m+1,n}.
	\end{gather}
\end{Prop}
\begin{proof}
	Given $W_{m,n}(t)$ and $W_{m+1,n}(t)$ one can find $W_{m+2,n}(t)$ in two different ways:
	\begin{enumerate}
		\item via intermediate steps through $W_{m,n-1}(t)$ using \eqref{eq:lin-W1}
		\begin{equation*}
		W_{m,n-1}(t) = \frac{1}{t} D^{-1}_{m,n-1} ( W_{m+1,n}(t) - W_{m,n}(t)),
		\end{equation*}
		 and through $W_{m+1,n-1}(t)$ using \eqref{eq:lin-W2}
		\begin{equation*}
		W_{m+1,n-1}(t) = W_{m,n}(t) + t E_{m,n-1} W_{m,n-1}(t) = 
		W_{m,n}(t) + E_{m,n-1}D^{-1}_{m,n-1} ( W_{m+1,n}(t) - W_{m,n}(t)),
		\end{equation*}
		what results, due to \eqref{eq:lin-W1}, in
		\begin{equation} \label{eq:W2-1}
		W_{m+2,n}(t) = 
		W_{m+1,n}(t) + t D_{m+1,n-1} \left[ W_{m,n}(t) + E_{m,n-1}D^{-1}_{m,n-1} ( W_{m+1,n}(t) - W_{m,n}(t))\right];
		\end{equation}
		\item via intermediate steps through $W_{m,n+1}(t)$ 
		using \eqref{eq:lin-W2}
		\begin{equation*}
		W_{m,n+1}(t) = W_{m+1,n}(t) - t E_{m,n} W_{m,n}(t),
		\end{equation*}		
		and then through $W_{m+1,n+1}$ using \eqref{eq:lin-W1}
		\begin{equation*}
		W_{m+1,n+1} = 
		W_{m+1,n}(t) + t \left( D_{m,n} - E_{m,n}\right) W_{m,n}(t),
		\end{equation*}
		what results, due to \eqref{eq:lin-W2}, in
		\begin{equation} \label{eq:W2-2}
		W_{m+2,n}(t) = 
		W_{m+1,n}(t) + t \left( D_{m,n} - E_{m,n}\right) W_{m,n}(t) + t E_{m+1,n}  W_{m+1,n}(t).
		\end{equation}
	\end{enumerate}
	Comparison of the coefficients in front of $W_{m,n}(t)$ and $W_{m+1,n}(t)$ in equations \eqref{eq:W2-1} and \eqref{eq:W2-2} gives the statement.
\end{proof}
\begin{Cor} \label{cor:DErho}
	The first equation \eqref{eq:ncT-1} of the nonlinear system can be resolved introducing the potential $\rho_{m,n}$ such that
	\begin{equation}
	D_{m,n} = \rho_{m+1,n+1} \rho^{-1}_{m,n}, \qquad 
	E_{m,n} = \rho_{m+1,n} \rho^{-1}_{m,n},
	\end{equation}
	while the second equation \eqref{eq:ncT-2} gives then \eqref{eq:nc-D-T}.
\end{Cor}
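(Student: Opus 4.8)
The plan is to establish the two assertions of the corollary separately. First I would show that the substitution $D_{m,n}=\rho_{m+1,n+1}\rho_{m,n}^{-1}$, $E_{m,n}=\rho_{m+1,n}\rho_{m,n}^{-1}$ resolves \eqref{eq:ncT-1} — meaning both that it satisfies it identically and that, conversely, every solution $(D,E)$ of \eqref{eq:ncT-1} arises in this way from some $\rho$. Then I would substitute these formulas into \eqref{eq:ncT-2} and reduce it by elementary manipulations to \eqref{eq:nc-D-T}.

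For the first assertion the identity is immediate: with the stated substitution the left side $D_{m+1,n}E_{m,n}$ and the right side $E_{m+1,n+1}D_{m,n}$ of \eqref{eq:ncT-1} both telescope to $\rho_{m+2,n+1}\rho_{m,n}^{-1}$ once the interior factors $\rho_{m+1,n}\rho_{m+1,n}^{-1}$, respectively $\rho_{m+1,n+1}\rho_{m+1,n+1}^{-1}$, cancel. To see that the parametrisation genuinely \emph{resolves} \eqref{eq:ncT-1}, I would rewrite that equation in the flat-connection form $E_{m+1,n+1}^{-1}D_{m+1,n}=D_{m,n}E_{m,n}^{-1}$ and introduce the auxiliary factor $T_{m,n}:=E_{m,n+1}^{-1}D_{m,n}$, so that the rewritten equation reads $T_{m+1,n}=D_{m,n}E_{m,n}^{-1}$. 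One then prescribes $\rho_{m,n}$ freely and invertibly along a line, say $n=0$, and propagates it over the lattice by the two left-multiplicative recurrences $\rho_{m+1,n}=E_{m,n}\rho_{m,n}$ and $\rho_{m,n+1}=T_{m,n}\rho_{m,n}$; equation \eqref{eq:ncT-1} is precisely the compatibility condition guaranteeing that the order in which these shifts are applied is immaterial, so $\rho_{m,n}$ is well defined (and invertible). By construction $E_{m,n}=\rho_{m+1,n}\rho_{m,n}^{-1}$, and since $D_{m,n}=E_{m,n+1}T_{m,n}$ one gets $\rho_{m+1,n+1}=E_{m,n+1}\rho_{m,n+1}=E_{m,n+1}T_{m,n}\rho_{m,n}=D_{m,n}\rho_{m,n}$, i.e.\ $D_{m,n}=\rho_{m+1,n+1}\rho_{m,n}^{-1}$, as claimed.

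For the second assertion I would simply substitute the formulas for $D$ and $E$ into \eqref{eq:ncT-2}, which becomes $\rho_{m+2,n}\rho_{m+1,n-1}^{-1}+\rho_{m+1,n}\rho_{m,n}^{-1}=\rho_{m+1,n+1}\rho_{m,n}^{-1}+\rho_{m+2,n}\rho_{m+1,n}^{-1}$, then collect the two terms carrying the common left factor $\rho_{m+2,n}$ on one side and the two carrying the common right factor $\rho_{m,n}^{-1}$ on the other to obtain $\rho_{m+2,n}\bigl(\rho_{m+1,n-1}^{-1}-\rho_{m+1,n}^{-1}\bigr)=\bigl(\rho_{m+1,n+1}-\rho_{m+1,n}\bigr)\rho_{m,n}^{-1}$; multiplying on the right by $\rho_{m,n}$ and relabelling $m\mapsto m-1$ gives exactly \eqref{eq:nc-D-T}. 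The only step that requires real care — and hence the main obstacle — is the existence argument in the first assertion: in the non-commutative setting one must check carefully that \eqref{eq:ncT-1} is the compatibility condition for the specific left-multiplicative overdetermined system defining $\rho$ (the asymmetry between left and right matters), and that invertibility of $\rho_{m,n}$ is preserved throughout; the remaining computations are routine, order-respecting cancellations.
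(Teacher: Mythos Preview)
Your proposal is correct. The paper states this corollary without proof, treating both parts as routine verifications; your treatment is actually more thorough than the paper's, since you supply a genuine existence argument for the potential $\rho$ (the flat-connection/compatibility reasoning) rather than merely checking that the ansatz annihilates \eqref{eq:ncT-1}. One small imprecision: with the two first-order left-multiplicative recurrences you wrote down, the potential is determined by its value at a \emph{single} lattice point, not along a whole line---prescribing it freely on all of $n=0$ would overdetermine the system unless those values already satisfy $\rho_{m+1,0}=E_{m,0}\rho_{m,0}$. This does not affect the argument, which otherwise goes through exactly as you describe; the derivation of \eqref{eq:nc-D-T} from \eqref{eq:ncT-2} via the index shift $m\mapsto m-1$ is the same computation one would do to flesh out the paper's bare statement.
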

\begin{Rem}
	The above substitution has a meaning in the general context of non-commutative discrete integrable systems, i.e. the potential does not have to come with the quasideterminantal interpretation in the non-commutative Pad\'{e} theory.
\end{Rem}

\section{Non-commutative Wynn recurrence}
\label{sec:nc-W-r}

In \cite{Draux-rev} it was also shown that Pad\'{e} approximants in non-commuting symbols satisfy the Wynn recurrence. In this Section we show that the Wynn recurrence follows from properties of the non-commutative Frobenius identities. Because our result is valid in the more general context of discrete non-commutative integrable systems we will not use the Pad\'{e} table notation. 

We also provide geometric meaning of the recurrence as a relation between five points of a projective line. In the classical Pad\'{e} approximation it will be the projective line over the field of (semiinfinite) Laurent series over the complex or real numbers whose proper subfield is the field of rational functions. 
The geometric meaning of the Wynn recurrence retains its validity also in the non-commutative case, in particular for Mal’cev--Neumann series~\cite{Malcev,Neumann} and the universal ring of fractions by Cohn~\cite{Cohn}.

\subsection{Derivation of the Wynn recurrence} \label{sec:Wynn-derivation}
Notice that the immediate consequence of the linear system \eqref{eq:lin-W1}-\eqref{eq:lin-W2} is the equation
\begin{equation} \label{eq:lin-W3}
W_{m+1,n+1}(t)  = W_{m+1,n}(t) + t C_{m,n}W_{m,n}(t), \qquad \text{where} \quad 
C_{m,n} =  D_{m,n} - E_{m,n}.
\end{equation}

\begin{Prop}
If $ P_{m,n}(t)$ and $Q_{m,n}(t)$ are nontrivial solutions of the linear system \eqref{eq:lin-W1}-\eqref{eq:lin-W2}, then the function $R_{m,n}(t) =  Q^{-1}_{m,n}(t) P_{m,n}(t)$ satisfies the non-commutative Wynn recurrence
\begin{equation} \label{eq:nc-W} \begin{split}
(R_{m+1,n}(t) - R_{m,n}(t) )^{-1} + (R_{m-1,n}(t) - R_{m,n}(t) )^{-1}= \\
(R_{m,n+1}(t) - R_{m,n}(t) )^{-1} + (R_{m,n-1}(t) - R_{m,n}(t) )^{-1}. \end{split}
\end{equation}	
\end{Prop}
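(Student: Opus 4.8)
The plan is to work directly from the linear system and its consequence \eqref{eq:lin-W3}, expressing each of the four nearest-neighbour differences $R_{m\pm1,n}-R_{m,n}$ and $R_{m,n\pm1}-R_{m,n}$ in a common form and then inverting. Writing $R_{m,n} = Q_{m,n}^{-1}P_{m,n}$ throughout (suppressing the argument $t$), the first step is to compute $R_{m+1,n} - R_{m,n}$. Using \eqref{eq:lin-W3} to express the $(m{+}1,n{+}1)$ and $(m{+}1,n)$ data in terms of the $(m,n)$ data — or, more symmetrically, by shifting indices so that \eqref{eq:lin-W1}–\eqref{eq:lin-W2} relate $(m{+}1,n)$ to $(m,n)$ and $(m,n{-}1)$ — one should obtain expressions of the shape $Q_{m+1,n}^{-1}(\text{something})\,Q_{m,n}^{-1}$, with the ``something'' built from $t$, the coefficients $D,E,C$, and a single factor $P$ or $Q$. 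The key arithmetic fact to exploit is that in any such difference the $P_{m,n}$ and $Q_{m,n}$ contributions combine: from \eqref{eq:lin-W1} one has $W_{m+1,n} = W_{m,n} + t(E_{m-1,n}\text{-type terms})W_{m-1,n}$, etc., so that after multiplying on the left by the appropriate inverse denominator the difference $R_{m+1,n}-R_{m,n}$ collapses to $t\,Q_{m+1,n}^{-1}(\cdot)Q_{m,n}^{-1}$ where $(\cdot)$ is essentially a single matrix coefficient. Inverting then gives $(R_{m+1,n}-R_{m,n})^{-1} = t^{-1}\,Q_{m,n}(\cdot)^{-1}Q_{m+1,n}$.

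The second step is to do the analogous computation for the other three neighbours, keeping careful track of which shifted coefficients $D_{m',n'}$, $E_{m',n'}$ appear and on which side the surviving $Q$ factors sit. I expect each inverse difference to take the form $t^{-1}$ times $Q_{m,n}$ on the left, $Q_{\text{neighbour}}$ on the right, and a coefficient sandwiched in between that is, up to the nonlinear relations \eqref{eq:ncT-1}–\eqref{eq:ncT-2}, the \emph{same} in the ``horizontal'' pair $(m{\pm}1,n)$ and the same in the ``vertical'' pair $(m,n{\pm}1)$, with the two pairs differing by exactly the term that \eqref{eq:ncT-2} equates. Then the third step is purely algebraic: form the sum $(R_{m+1,n}-R_{m,n})^{-1}+(R_{m-1,n}-R_{m,n})^{-1}$, observe that the common left factor $Q_{m,n}$ and common right structure let one factor it as $t^{-1}Q_{m,n}\bigl[(\cdot)_{+}^{-1}Q_{m+1,n} + (\cdot)_{-}^{-1}Q_{m-1,n}\bigr]$ or, better, after using \eqref{eq:lin-W1}–\eqref{eq:lin-W2} again to relate $Q_{m\pm1,n}$ back to $Q_{m,n}$-neighbours, reduce the bracket to something symmetric under swapping the roles of $m$ and $n$. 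Subtracting the vertical sum should then cancel identically by \eqref{eq:ncT-1}–\eqref{eq:ncT-2}.

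**Main obstacle.** The hard part will not be any single identity but the bookkeeping of non-commutative order: every cancellation that is trivial in the commutative case (where $\Delta_{m,n}$ are scalars and the Wynn relation is the Frobenius five-term identity) must here be engineered so that the surviving factors line up on the correct side, and one is not free to move $Q_{m,n}$ past a coefficient. Concretely, I anticipate the crux is showing that the ``middle'' coefficients appearing in the four inverted differences satisfy a two-term relation equivalent to \eqref{eq:ncT-2} once \eqref{eq:ncT-1} is used to identify $D_{m+1,n}E_{m,n}$ with $E_{m+1,n+1}D_{m,n}$ — i.e. the nonlinear system is exactly what makes the five-point identity close. A cleaner route, which I would try first, is to avoid $D,E,C$ entirely: resolve them through the potential $\rho_{m,n}$ as in the Corollary after the Proposition ($D_{m,n}=\rho_{m+1,n+1}\rho_{m,n}^{-1}$, $E_{m,n}=\rho_{m+1,n}\rho_{m,n}^{-1}$), so that each inverse difference becomes $t^{-1}\rho_{m,n}\bigl(\rho_{\text{neighbour}} - \rho_{m,n}\bigr)^{-1}\rho_{\text{neighbour}}$-type, and then \eqref{eq:nc-W} follows from \eqref{eq:nc-D-T} together with a short manipulation of the ``hockey-stick'' identity $x^{-1}-(x+y)^{-1} = x^{-1}y(x+y)^{-1}$. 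If that substitution works, the whole proof reduces to the Toda equation \eqref{eq:nc-D-T} plus elementary non-commutative fraction algebra, and the only genuine check is that the $W$'s (hence the $P,Q$ appearing in $R$) drop out cleanly at the first step.
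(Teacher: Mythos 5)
Your overall direction --- substitute $R_{m,n}=Q_{m,n}^{-1}P_{m,n}$ into the linear system, eliminate everything except the $R$'s, and finish by splitting $R_{m+1,n}-R_{m,n-1}$ as $(R_{m+1,n}-R_{m,n})+(R_{m,n}-R_{m,n-1})$ to produce sums of inverses --- is the same as the paper's, and that final algebraic step is indeed what closes the argument there. But the middle of your proposal, where the real work lies, does not go through in either of the two forms you offer. A single application of one linear equation does \emph{not} collapse $R_{m+1,n}-R_{m,n}$ to $t\,Q_{m+1,n}^{-1}(\cdot)\,Q_{m,n}^{-1}$ with a single coefficient in the middle: from $W_{m+1,n}=W_{m,n}+tD_{m,n-1}W_{m,n-1}$ one gets
\[
Q_{m+1,n}\bigl(R_{m+1,n}-R_{m,n}\bigr)=t\,D_{m,n-1}\,Q_{m,n-1}\bigl(R_{m,n-1}-R_{m,n}\bigr),
\]
i.e.\ each difference is expressed through \emph{another} difference conjugated by $Q$'s, and iterating blindly spirals off the five-point stencil (the analogous relation for $R_{m-1,n}-R_{m,n}$ drags in $R_{m-1,n-1}$). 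What actually works is to take precisely the three linear relations whose index triples lie inside the stencil --- \eqref{eq:lin-W1} with $n\to n-1$, \eqref{eq:lin-W2} as it stands, and \eqref{eq:lin-W3} with $m\to m-1$ --- and in each one to right-multiply the $Q$-copy of the equation by the value of $R$ at the ``third'' point, so that the coefficient $D$, $E$ or $C$ drops out entirely, yielding coefficient-free relations such as $Q_{m+1,n}(R_{m+1,n}-R_{m,n-1})=Q_{m,n}(R_{m,n}-R_{m,n-1})$; from two of these one solves for $Q_{m+1,n}$ and $Q_{m,n+1}$ in terms of $Q_{m,n}$ and substitutes into the third, after which $Q_{m,n}$ cancels. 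Your proposal never identifies this closing mechanism, which is the heart of the proof.

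The ``cleaner route'' you say you would try first is not viable. The inverse differences $(R_{\cdot}-R_{m,n})^{-1}$ cannot be written in terms of the potential $\rho$ alone: the denominators $Q$ survive as conjugating factors (already visible in the displayed relation above) and carry information about the particular solution pair $(P,Q)$ that $\rho$ does not. Correspondingly, the Wynn recurrence is not a consequence of the Toda equation \eqref{eq:nc-D-T}; it is a property of ratios of solutions of the \emph{linear problem} (as the Remark at the end of Section~\ref{sec:Wynn-derivation} emphasizes, it holds for general solutions with arbitrary initial boundary data), and already in the commutative Pad\'{e} setting one needs the Frobenius identities involving the polynomials, not just the identity for the $\Delta$'s. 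So the step you flag as ``the only genuine check'' --- that the $P$'s and $Q$'s drop out cleanly --- is exactly the step that fails as you set it up, and eliminating them correctly is the actual content of the proof.
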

\begin{proof}
Inserting such $R_{m,n}(t)$ in the linear system \eqref{eq:lin-W1}-\eqref{eq:lin-W2} and \eqref{eq:lin-W3}
we obtain the following equations
\begin{align*}
Q_{m,n+1}(t) (R_{m,n+1}(t) - R_{m-1,n}(t)) & = 
Q_{m,n}(t) (R_{m,n}(t) - R_{m-1,n}(t)), \\
Q_{m+1,n}(t) (R_{m+1,n}(t) - R_{m,n-1}(t)) & = 
Q_{m,n}(t) (R_{m,n}(t) - R_{m,n-1}(t)), \\
Q_{m+1,n}(t) (R_{m+1,n}(t) - R_{m,n}(t)) & = 
Q_{m,n+1}(t) (R_{m,n+1}(t) - R_{m,n}(t)).
\end{align*}	
Elimination of $Q_{m,n}(t)$ and its shifts leads directly to the equation
\begin{gather}
(R_{m,n}(t) - R_{m-1,n}(t))^{-1} 
(R_{m,n}(t) - R_{m,n-1}(t)) = \\ \nonumber
(R_{m,n+1}(t) - R_{m-1,n}(t))^{-1}
(R_{m,n+1}(t) - R_{m,n}(t))
(R_{m+1,n}(t) - R_{m,n}(t))^{-1}
(R_{m+1,n}(t) - R_{m,n-1}(t)),
\end{gather}
equivalent to the Wynn recurrence.
\end{proof}
\begin{Rem}
	In the standard terminology used in the Pad\'{e} theory, where $R_{m,n}(t)$ is denoted by $C$ (for center) and the other terms $R_{m,n-1}=S$ (south), $R_{m+1,n} = E$ (east), $R_{m-1,n} = W$ (west) and $R_{m,n+1}=N$ (north), the non-commutative Wynn recurrence takes the form \cite{Wynn,Draux-rev}
\begin{equation*}
(E-C)^{-1} + (W-C)^{-1} = (N-C)^{-1} + (S-C)^{-1}.
\end{equation*}
\end{Rem}
\begin{Cor}
Two systems analogous to equations \eqref{eq:ncT-1}-\eqref{eq:ncT-2} which involve other pairs of unknown functions have the form
	\begin{gather} \label{eq:ncT-1-CD}
D_{m+1,n} C_{m,n} = C_{m,n+1} D_{m,n} \\
\label{eq:ncT-2-CD}
D_{m+1,n-1} + C_{m+1,n} = D_{m+1,n} + C_{m,n},
\end{gather}
and
	\begin{gather} \label{eq:ncT-1-CE}
D_{m,n+1} E_{m,n} = E_{m+1,n+1} C_{m,n} \\
\label{eq:ncT-2-CE}
C_{m+1,n} + E_{m+1,n} = C_{m,n+1} + E_{m+1,n+1}.
\end{gather}
\end{Cor}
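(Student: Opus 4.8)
The plan is to obtain each of the two displayed systems as the compatibility condition of a further pair of the linear equations \eqref{eq:lin-W1}, \eqref{eq:lin-W2}, \eqref{eq:lin-W3}, exactly in the way \eqref{eq:ncT-1}--\eqref{eq:ncT-2} were obtained from the pair \eqref{eq:lin-W1}, \eqref{eq:lin-W2}. The point of departure is that, since $C_{m,n}=D_{m,n}-E_{m,n}$, equation \eqref{eq:lin-W3} is nothing but the difference of \eqref{eq:lin-W1} and \eqref{eq:lin-W2}; hence any two of these three linear relations are equivalent to the whole linear system, and the consistency requirement is one and the same constraint written in three different sets of unknowns. The pair \eqref{eq:lin-W1}, \eqref{eq:lin-W3} carries the coefficients $D_{m,n}$ and $C_{m,n}$ and should produce \eqref{eq:ncT-1-CD}--\eqref{eq:ncT-2-CD}, while the pair \eqref{eq:lin-W2}, \eqref{eq:lin-W3} carries $E_{m,n}$ and $C_{m,n}$ and should produce \eqref{eq:ncT-1-CE}--\eqref{eq:ncT-2-CE}.

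First I would treat the pair \eqref{eq:lin-W1}, \eqref{eq:lin-W3}. As in the proof of the Proposition establishing \eqref{eq:ncT-1}--\eqref{eq:ncT-2}, one computes $W$ at a single lattice point — for definiteness $W_{m+2,n}(t)$, reached from the two free values $W_{m,n}(t)$ and $W_{m+1,n}(t)$ along the row $n$ — by two distinct chains of elementary substitutions built from \eqref{eq:lin-W1} and \eqref{eq:lin-W3} (and, in one of the chains, the inverse of a coefficient): one chain descending through row $n-1$, the other ascending through row $n+1$. Each chain expresses $W_{m+2,n}(t)$ as a left linear combination $A\,W_{m+1,n}(t)+B\,W_{m,n}(t)$ whose $t$-dependent coefficients are products of $C$'s, $D$'s and their inverses. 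Treating $W_{m,n}(t)$ and $W_{m+1,n}(t)$ as independent, equating the two expressions for the coefficient of $W_{m+1,n}(t)$ and clearing the inverse yields the product relation \eqref{eq:ncT-1-CD}; equating the coefficients of $W_{m,n}(t)$, with \eqref{eq:ncT-1-CD} fed back in, yields the additive relation \eqref{eq:ncT-2-CD}. The pair \eqref{eq:lin-W2}, \eqref{eq:lin-W3} is handled in exactly the same way and yields \eqref{eq:ncT-1-CE}--\eqref{eq:ncT-2-CE}.

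A shorter, purely algebraic route is also available, and this is the one I would actually write up: both systems follow directly from \eqref{eq:ncT-1}--\eqref{eq:ncT-2}. For the first, substitute $C_{m,n}=D_{m,n}-E_{m,n}$ into both sides of \eqref{eq:ncT-1-CD}; using \eqref{eq:ncT-1} to replace $D_{m+1,n}E_{m,n}$ by $E_{m+1,n+1}D_{m,n}$ makes a common right factor $D_{m,n}$ appear, and cancelling it leaves precisely \eqref{eq:ncT-2} with indices shifted, while \eqref{eq:ncT-2-CD} reduces to \eqref{eq:ncT-2} outright. Substituting instead $D_{m,n}=C_{m,n}+E_{m,n}$ into \eqref{eq:ncT-1}--\eqref{eq:ncT-2} and simplifying with the help of \eqref{eq:ncT-1} and \eqref{eq:ncT-2} (with indices shifted) yields the second system in the same fashion.

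I expect the only genuine difficulty to be bookkeeping: on passing between the product relations a shift $n\mapsto n+1$ must be tracked carefully, and, because the coefficient ring is a skew field, every elimination and every matching of coefficients has to be performed on the correct (left or right) side, so that the inverses of $C$, $D$ and $E$ enter with the right handedness. Once this is arranged the argument is routine and runs strictly parallel to the proof of the Proposition leading to \eqref{eq:ncT-1}--\eqref{eq:ncT-2}.
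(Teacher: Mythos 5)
Your strategy is sound and is essentially the one the paper leaves implicit: the three linear relations \eqref{eq:lin-W1}, \eqref{eq:lin-W2}, \eqref{eq:lin-W3} are pairwise equivalent modulo $C_{m,n}=D_{m,n}-E_{m,n}$, so the compatibility conditions for the pairs $(D,C)$ and $(C,E)$ must be algebraic consequences of \eqref{eq:ncT-1}--\eqref{eq:ncT-2} together with that substitution. Your verification of the first system \eqref{eq:ncT-1-CD}--\eqref{eq:ncT-2-CD} along these lines is correct: \eqref{eq:ncT-2-CD} is \eqref{eq:ncT-2} rewritten, and \eqref{eq:ncT-1-CD} follows from \eqref{eq:ncT-1} plus \eqref{eq:ncT-2} shifted in $n$.

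The gap is in the second system, which you assert follows ``in the same fashion'' without carrying out the computation. If you do carry it out, \eqref{eq:ncT-2-CE} indeed reduces to \eqref{eq:ncT-2} with $n\mapsto n+1$, but the product relation that actually follows is
\begin{equation*}
C_{m,n+1}\,E_{m,n} \;=\; E_{m+1,n+1}\,C_{m,n},
\end{equation*}
not \eqref{eq:ncT-1-CE} as printed. Indeed, $E_{m+1,n+1}C_{m,n}=E_{m+1,n+1}D_{m,n}-E_{m+1,n+1}E_{m,n}=D_{m+1,n}E_{m,n}-E_{m+1,n+1}E_{m,n}$ by \eqref{eq:ncT-1}, and $D_{m+1,n}-E_{m+1,n+1}=D_{m,n+1}-E_{m,n+1}=C_{m,n+1}$ by \eqref{eq:ncT-2} shifted in $n$; so the printed left-hand side $D_{m,n+1}E_{m,n}$ would require the additional identity $E_{m,n+1}E_{m,n}=0$, which is false in general. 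The $D_{m,n+1}$ in \eqref{eq:ncT-1-CE} is evidently a misprint for $C_{m,n+1}$ (the equation is supposed to involve only the pair $C,E$, and the corrected form restores the uniform shift pattern $D_{m+1,n}$, $E_{m+1,n+1}$, $C_{m,n+1}$ seen in the other two product relations). Your write-up must either derive the corrected equation or flag the discrepancy; as it stands, the step ``yields the second system in the same fashion'' fails for \eqref{eq:ncT-1-CE}.
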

\begin{Rem}
Notice that the Wynn recurrence is valid in the more general context of the linear problem of the non-commutative discrete-time Toda equations.  	
	In the standard application to Pad\'{e} approximation we are looking for $R_{m,n}(t)$, $m,n\geq 0$ with the initial boundary data of consisting of $R_{m,-1}(t) = \infty$, $R_{m,0}(t) = F_m(t)$, and $R_{-1,n} = 0$. For the general case with $(m,n)\in\mathbb{Z}^2$ as the initial boundary data one can take, for example, the values of $R_{m,-1}$ and $R_{m,0}$ for $m\in\mathbb{Z}$.
	\end{Rem}

All the above results have their "transposed" versions with reversed order of multiplication, what is motivated by the theory of non-commutative right Pad\'{e} approximants (see  Remark at the end of Section~\ref{sec:ncP}). By $\widetilde{W}_{m,n}$, $\widetilde{D}_{m,n}$, $\widetilde{E}_{m,n}$ and $\widetilde{\rho}_{m,n}$ let us denote the corresponding analogs of the functions appearing in Proposition~\ref{prop:WDE} and Corollary~\ref{cor:DErho}, whose "transposed" versions we discuss below in the general context of non-commutative discrete integrable systems (in particular, not restricting ourselves to the non-commutative right Pad\'{e} approximants).
The proof of the following result is left to the Reader as an exercise.
\begin{Prop}
	(i) The compatibility of the linear system 
\begin{align} \label{eq:lin-W1-r}
\widetilde{W}_{m+1,n+1}(t) & = \widetilde{W}_{m,n+1}(t) + t\widetilde{W}_{m,n}(t) \widetilde{D}_{m,n}, \\
\label{eq:lin-W2-r}
\widetilde{W}_{m+1,n}(t) & = \widetilde{W}_{m,n+1}(t) + t \widetilde{W}_{m,n}(t) \widetilde{E}_{m,n},
\end{align} is provided by equations
\begin{gather} \label{eq:ncT-1-r}
\widetilde{E}_{m,n} \widetilde{D}_{m+1,n}  = \widetilde{D}_{m,n} \widetilde{E}_{m+1,n+1}  \\
\label{eq:ncT-2-r}
\widetilde{D}_{m+1,n-1} + \widetilde{E}_{m,n} = \widetilde{D}_{m,n} + \widetilde{E}_{m+1,n}.
\end{gather}
(ii) 	The first equation \eqref{eq:ncT-1-r} of the nonlinear system can be resolved introducing the potential $\widetilde{\rho}_{m,n}$ such that
\begin{equation}
\widetilde{D}_{m,n} = \widetilde{\rho}^{-1}_{m,n} \widetilde{\rho}_{m+1,n+1} , \qquad 
\widetilde{E}_{m,n} =\widetilde{\rho}^{-1}_{m,n} \widetilde{\rho}_{m+1,n} ,
\end{equation}
while the second equation \eqref{eq:ncT-2-r} gives 	\begin{equation} \label{eq:nc-D-T-r}
\widetilde{\rho}_{m-1,n} \left( \widetilde{\rho}_{m,n-1}^{-1} - \widetilde{\rho}_{m,n}^{-1} \right) \widetilde{\rho}_{m+1,n} = \widetilde{\rho}_{m,n+1} - \widetilde{\rho}_{m,n}.
\end{equation}
(iii) If $ \widetilde{P}_{m,n}(t)$ and $\widetilde{Q}_{m,n}(t)$ are nontrivial solutions of the linear system \eqref{eq:lin-W1-r}-\eqref{eq:lin-W2-r}, then the function $\widetilde{R}_{m,n}(t) =  \widetilde{P}_{m,n}(t) \widetilde{Q}^{-1}_{m,n}(t) $ satisfies the non-commutative Wynn recurrence
\eqref{eq:nc-W}.

\end{Prop}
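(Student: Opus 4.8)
The plan is to obtain all three parts at once, for free, from the results already proved for the ``left'' system --- the Proposition on compatibility of \eqref{eq:lin-W1}--\eqref{eq:lin-W2}, its Corollary, and the Proposition on the Wynn recurrence --- by exploiting the elementary fact that every skew-field identity remains valid when the order of all products is reversed. Concretely, write $\mathbb{K}$ for the ground skew field and pass to its opposite skew field $\mathbb{K}^{\mathrm{op}}$ (with $t$ still central, so $\mathbb{K}[[t]]^{\mathrm{op}}=\mathbb{K}^{\mathrm{op}}[[t]]$). The quoted results were established for an arbitrary skew field, hence they hold verbatim in $\mathbb{K}^{\mathrm{op}}$; re-reading them in $\mathbb{K}$ turns every left multiplication into a right multiplication while leaving sums and single-element inverses untouched. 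Reading $\widetilde{W}_{m,n}$, $\widetilde{D}_{m,n}$, $\widetilde{E}_{m,n}$ for $W_{m,n}$, $D_{m,n}$, $E_{m,n}$ this converts \eqref{eq:lin-W1}--\eqref{eq:lin-W2} into \eqref{eq:lin-W1-r}--\eqref{eq:lin-W2-r} and the compatibility equations \eqref{eq:ncT-1}--\eqref{eq:ncT-2} into \eqref{eq:ncT-1-r}--\eqref{eq:ncT-2-r}, which is exactly part (i).

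For part (ii) I would transport the substitution of the Corollary: $D_{m,n}=\rho_{m+1,n+1}\rho_{m,n}^{-1}$ and $E_{m,n}=\rho_{m+1,n}\rho_{m,n}^{-1}$ become, after reversal and with $\widetilde{\rho}_{m,n}$ read for $\rho_{m,n}$, the formulas $\widetilde{D}_{m,n}=\widetilde{\rho}_{m,n}^{-1}\widetilde{\rho}_{m+1,n+1}$ and $\widetilde{E}_{m,n}=\widetilde{\rho}_{m,n}^{-1}\widetilde{\rho}_{m+1,n}$; these resolve \eqref{eq:ncT-1-r} identically, and \eqref{eq:ncT-2-r} then turns into the reversed form \eqref{eq:nc-D-T-r} of \eqref{eq:nc-D-T}. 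One may instead verify this directly by substituting the potentials into \eqref{eq:ncT-2-r} and multiplying out, mirroring line by line the short computation at the end of the Corollary's proof.

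For part (iii), given nontrivial solutions $\widetilde{P}_{m,n}(t)$, $\widetilde{Q}_{m,n}(t)$ of \eqref{eq:lin-W1-r}--\eqref{eq:lin-W2-r}, regard them as nontrivial solutions of \eqref{eq:lin-W1}--\eqref{eq:lin-W2} over $\mathbb{K}^{\mathrm{op}}$, apply the Wynn Proposition there to conclude that the $\mathbb{K}^{\mathrm{op}}$-quotient $Q_{m,n}^{-1}P_{m,n}$ satisfies \eqref{eq:nc-W}, and read this back in $\mathbb{K}$: the $\mathbb{K}^{\mathrm{op}}$-product $Q_{m,n}^{-1}P_{m,n}$ is the $\mathbb{K}$-product $\widetilde{P}_{m,n}(t)\,\widetilde{Q}_{m,n}^{-1}(t)=\widetilde{R}_{m,n}(t)$, while \eqref{eq:nc-W} involves only sums and inverses of the differences $R_{m',n'}-R_{m,n}$ and is therefore unchanged by reversal of products. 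Hence $\widetilde{R}_{m,n}(t)$ satisfies \eqref{eq:nc-W}. Equivalently, one can copy the three displayed relations in the Wynn Proposition's proof with every product written in the opposite order and eliminate the $\widetilde{Q}$'s in the same sequence.

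I do not anticipate any genuine obstacle. The single point that deserves care is the bookkeeping of sides: one must check that the Toda-type equations and the Wynn recurrence \eqref{eq:nc-W} are built only from sums, single-element inverses, and products, so that reversal sends each of them to precisely the starred counterpart in the statement, and that $\widetilde{R}_{m,n}=\widetilde{P}_{m,n}\widetilde{Q}_{m,n}^{-1}$ is exactly the image of $R_{m,n}=Q_{m,n}^{-1}P_{m,n}$. Since passing to the opposite skew field is an exact formal operation, no hypothesis --- nontriviality of the solutions or invertibility of the relevant differences --- is weakened, and the proof is as short as the original ones it mirrors.
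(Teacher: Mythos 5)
Your proposal is correct and is exactly the argument the paper intends: the authors explicitly leave this proposition as an exercise, describing it as the ``transposed'' version with reversed order of multiplication, and your passage to the opposite skew field $\mathbb{K}^{\mathrm{op}}$ (with $t$ central) is a clean formalization of that transposition, including the key observations that \eqref{eq:ncT-2-r} and the Wynn recurrence \eqref{eq:nc-W} are built only from sums and single-element inverses and are hence self-transposed, while $Q_{m,n}^{-1}P_{m,n}$ read in $\mathbb{K}^{\mathrm{op}}$ becomes $\widetilde{P}_{m,n}\widetilde{Q}_{m,n}^{-1}$.
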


\subsection{Geometry of the Wynn recurrence}
\label{sec:proj-geome}
This Section is devoted to the geometric meaning of the Wynn recurrence interpreted as the relation between five points of a projective line. We consider geometry over arbitrary (including skew) field $\mathbb{D}$. The original case studied by Wynn~\cite{Wynn} dealt with the field of rational functions, however the non-commuting variables were also within his interest~\cite{Wynn-NCF}. Our approach will follow that used recently in \cite{DoliwaKosiorek} to provide geometric meaning of the non-commutative discrete Schwarzian Kadomtsev--Petviashvili equation.

\begin{Prop} \label{prop:Wynn-geom}
	Interpreted as a relation between five points of the projective (base) line the Wynn recurrence~\eqref{eq:nc-W} is equivalent the following construction of the point $R_{m,n+1}$ once the points $R_{m-1,n}$, $R_{m,n-1}$, $R_{m,n}$ and $R_{m+1,n}$ are given (see Figure \ref{fig:Wynn-geometry-12}):
	\begin{figure}[h!]
		\begin{center}
			\includegraphics[width=9cm]{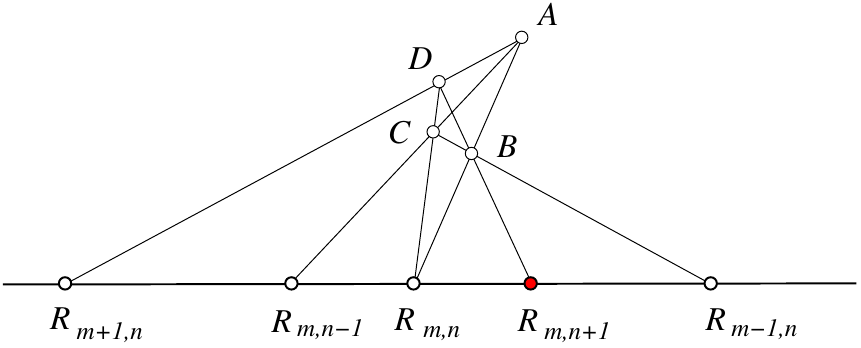}
		\end{center}
		\caption{Geometric meaning of the Wynn recurrence}
		\label{fig:Wynn-geometry-12}
	\end{figure}
	\begin{itemize}
		\item select any point $A$ outside the base line,
		\item on the line $\langle A, R_{m,n} \rangle$ select any point $B$ different from $A$ and $R_{m,n}$,
		\item define point $C$ as the intersection of lines $\langle A, R_{m,n-1} \rangle$ and $\langle B, R_{m-1,n} \rangle$,
		\item define point $D$ as the intersection of lines $\langle C, R_{m,n} \rangle$ and $\langle A, R_{m+1,n} \rangle$,
		\item point $R_{m,n+1}$ is the intersection of the line  $\langle B, D \rangle$ with the base line.
	\end{itemize}
\end{Prop}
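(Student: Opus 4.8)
The plan is to translate the Wynn recurrence~\eqref{eq:nc-W} into a statement about cross-ratios on the base line, and then to realize that cross-ratio by the standard perspectivity/projectivity construction of projective geometry. Specifically, I would first rewrite~\eqref{eq:nc-W} in the form obtained in the proof of the previous Proposition, namely the identity
\begin{equation*}
(R_{m,n} - R_{m-1,n})^{-1}(R_{m,n} - R_{m,n-1}) =
(R_{m,n+1} - R_{m-1,n})^{-1}(R_{m+1,n} - R_{m,n})(R_{m+1,n} - R_{m,n})^{-1}(R_{m+1,n} - R_{m,n-1}),
\end{equation*}
which collapses (after cancelling the middle pair) to an equality of two ``non-commutative cross-ratio'' expressions built from the quintuple $R_{m-1,n}, R_{m,n-1}, R_{m,n}, R_{m+1,n}, R_{m,n+1}$. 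I would recall (from \cite{DoliwaKosiorek}) the projective-line convention that a point is given by homogeneous coordinates over $\DD$ and that such a product of differences is exactly the invariant attached to four collinear points; the Wynn recurrence then says that two such invariants agree, i.e. that $R_{m,n+1}$ is the unique point making a prescribed cross-ratio equal to another prescribed cross-ratio.

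Next I would carry out the synthetic construction and show it produces precisely that point. Choose $A$ off the base line and $B$ on $\langle A, R_{m,n}\rangle$. The perspectivity with centre $B$ from the base line to the line $\langle A, R_{m,n}\rangle$ sends $R_{m-1,n}$ to the point where $\langle B, R_{m-1,n}\rangle$ meets $\langle A, R_{m,n}\rangle$; intersecting the ray $\langle A, R_{m,n-1}\rangle$ with $\langle B, R_{m-1,n}\rangle$ gives the auxiliary point $C$. Then the perspectivity with centre $A$ between $\langle A, R_{m,n}\rangle$-related lines realizes the passage from $R_{m,n-1}$ to $R_{m+1,n}$, and $D = \langle C, R_{m,n}\rangle \cap \langle A, R_{m+1,n}\rangle$ records the composite. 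Finally $R_{m,n+1} = \langle B, D\rangle \cap \text{(base line)}$ pulls the result back to the base line via a perspectivity with centre $B$. Composing these three perspectivities yields a projectivity of the base line to itself; I would identify its action on the four reference points $R_{m-1,n}, R_{m,n-1}, R_{m,n}, R_{m+1,n}$ and check that the image of the ``extra'' point is forced to satisfy exactly the cross-ratio identity above, hence equals $R_{m,n+1}$.

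The cleanest way to make the bookkeeping airtight is to put coordinates on everything: take $R_{m,n}$ as the point $[1:0]$ (or $\infty$), send $A$ and $B$ to convenient positions on the affine chart, and compute the three intersection points $C$, $D$, and the final one by the usual line-meet formulas, which over a skew field must be done with care about the side of multiplication. This reduces the whole proposition to verifying a single rational identity in $R_{m-1,n}, R_{m,n-1}, R_{m+1,n}$ (and the free parameters coming from $A,B$, which must drop out), and that identity is precisely~\eqref{eq:nc-W} after clearing inverses. I would also remark that the freedom in choosing $A$ and $B$ corresponds to the fact that the cross-ratio, not the individual perspectivities, is the invariant, so the construction is well-defined.

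The main obstacle I anticipate is the non-commutativity: over a skew field one cannot freely reorder factors, so I must be scrupulous about whether homogeneous coordinates are treated as left or right vectors, and correspondingly whether the meet of two lines is computed with scalars acting on the left or the right. Getting these conventions consistent with the form of~\eqref{eq:nc-W} derived above — in particular making sure the auxiliary parameters attached to $A$ and $B$ genuinely cancel rather than merely cancelling ``up to conjugation'' — is the delicate part; once the conventions are fixed to match \cite{DoliwaKosiorek}, the remaining computation is a routine, if careful, chain of three perspectivities.
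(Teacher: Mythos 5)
Your fallback plan --- put affine coordinates on the plane, compute $C$, $D$ and the final intersection explicitly, and check that the resulting rational identity is \eqref{eq:nc-W} --- is essentially the paper's proof. The paper streamlines it in a way you leave open: it first invokes the Desargues theorem to conclude that the construction does not depend on the choices of $A$ and $B$, and then fixes $A$ to be the point at infinity of the second coordinate direction and $B=(R_{m,n},1)$. With that choice the lines through $A$ are vertical, so $C$ and $D$ automatically have first coordinates $R_{m,n-1}$ and $R_{m+1,n}$ and only their second coordinates need to be computed; no free parameters have to be carried through and cancelled, which removes the ``delicate part'' you anticipate.

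Two cautions about the framing you put on top of this. First, over a skew field the cross-ratio of four collinear points is defined, and preserved by projectivities, only up to conjugation; so ``the Wynn recurrence says that two cross-ratios agree'' cannot by itself pin down $R_{m,n+1}$, and the composition-of-three-perspectivities argument does not replace the coordinate computation in the non-commutative case --- it only motivates it. Second, the displayed identity you start from contains the factor pair $(R_{m+1,n}-R_{m,n})(R_{m+1,n}-R_{m,n})^{-1}$, which cancels to $1$; taken literally, the ``collapsed'' relation is \emph{not} equivalent to \eqref{eq:nc-W} (already in the commutative case it gives a different value of $R_{m,n+1}$). This is a misprint in the source: eliminating $Q_{m,n}$ from the three linear relations actually yields the middle pair $(R_{m,n+1}-R_{m,n})(R_{m+1,n}-R_{m,n})^{-1}$, which does not cancel. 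So the target of your final verification should be \eqref{eq:nc-W} itself, not the collapsed equation. With these two points repaired, your computation goes through and coincides with the paper's.
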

\begin{proof}
	The above arbitrariness of the points $A$ and $B$ in the construction is known in the projective geometry~\cite{Coxeter-PG} and follows from  the Desargues theorem. We will use the freedom to simplify the calculation.
	
	Consider the non-homogeneous coordinates where the base line (except from the infinity point) is given by the first coordinate $\{ (x,0) \colon x\in \DD \}$. As  $A$ we choose the point of the infinity line where the lines parallel to the second coordinate line meet.  Then as the point $B$ on the line through $(R_{m,n},0)$ and parallel to that line we take the point $B=(R_{m,n},1)$. From now on there is no freedom in the construction. 
	
	The coordinates of the point $C$ 
	\begin{equation}
	C=(R_{m,n-1}, t), \qquad t =  
	1 - (R_{m,n-1} - R_{m,n})(R_{m-1,n} - R_{m,n})^{-1}	
	\end{equation} 
	can be found from the equation
	\begin{equation*}
	(R_{m,n-1},t) = (R_{m-1,n} , 0 ) + s \left[ (R_{m,n},1) - (R_{m-1,n},0) \right],
	\end{equation*}
	where by the standard convention when representing vectors as rows we multiply them by scalars from the left. Similar calculation gives coordinates of the point $D$
	\begin{equation}
	D=(R_{m+1,n}, t^\prime), \qquad t^\prime =  
	(R_{m+1,n} - R_{m,n})[(R_{m,n-1} - R_{m,n})^{-1} - (R_{m-1,n} - R_{m,n})^{-1}] .	
	\end{equation} 
	Finally, $R_{m,n+1}$ can be calculated from the equation
	\begin{equation*}
	(R_{m,n+1},0) = (R_{m,n} , 1 ) + s^\prime \left[ (R_{m+1,n}, t^\prime) - (R_{m,n},1) \right],
	\end{equation*}
	which gives $s^\prime = (1-t^\prime)^{-1}$ and leads to the Wynn recurrence \eqref{eq:nc-W}.
\end{proof}

\begin{Rem}
	In defining coordinates on projective line~\cite{Coxeter-PG} the above construction provides the additive structure in the (skew) field.\footnote{We thank Jaros{\l}aw Kosiorek for pointing us such a geometric interpretation of the construction.} Indeed, when $R_{m,n}$ is moved to the infinity point, and points $R_{m,n-1}, R_{m-1,n}$, $R_{m+1,n}$ are identified with $0, a, b$, respectively, then $R_{m,n+1}$ represents $a+b$, see Figure~\ref{fig:Wynn-addition} (here parallel lines intersect in the corresponding points $A$, $B$ or $R_{m,n}$ of the infinity line). 
	\begin{figure}[h!]
		\begin{center}
			\includegraphics[width=9cm]{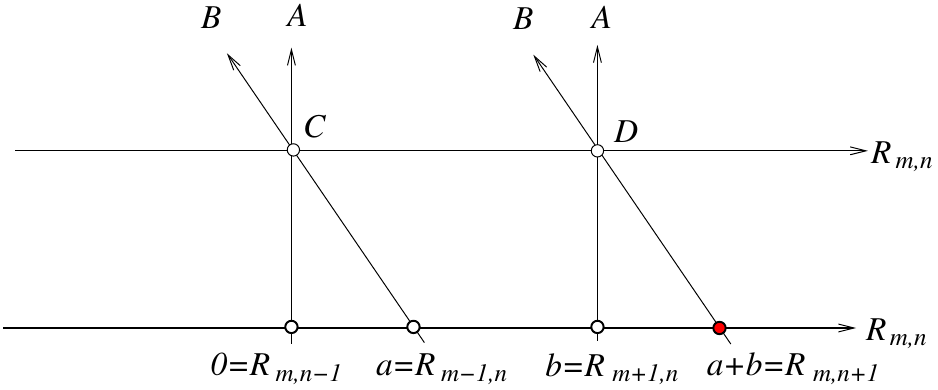}
		\end{center}
		\caption{Additive structure of a projective line}
		\label{fig:Wynn-addition}
	\end{figure}
	Algebraic verification follows from the fact that in the limit $R_{m,n}\to\infty$ the Wynn recurrence reduces to
	\begin{equation}
	R_{m-1,n} + R_{m+1,n} = R_{m,n-1} + R_{m,n+1}, \qquad R_{m,n}=\infty .
	\end{equation}
\end{Rem}

\subsection{Reduction of the non-commutative discrete Schwarzian KP equation} It is well known~\cite{KricheverLipanWiegmannZabrodin,Bialecki-1DT} that the discrete-time Toda chain system~\eqref{eq:D-T} can be obtained as a reduction of the discrete KP equation in its bilinear form~\cite{Hirota-2dT}. Let us derive the non-commutative Wynn recurrence as a corresponding reduction of the non-commutative discrete KP equation in its Schwarzian form~\cite{FWN-Capel,BoKo-N-KP}
\begin{equation} \label{eq:dSKP} \begin{split}
(R_{m,n+1,p+1} - R_{m,n,p+1})(R_{m,n+1,p+1} - R_{m,n+1,p})^{-1}  (R_{m+1,n+1,p} - R_{m,n+1,p}) = \\(R_{m+1,n,p+1} - R_{m,n,p+1}) (R_{m+1,n,p+1} - R_{m+1,n,p})^{-1}
(R_{m+1,n+1,p} - R_{m+1,n,p}),\end{split}
\end{equation}
where $R\colon \ZZ^3 \to \DD$ is an unknown function of three discrete variables. 

\begin{Prop} \label{prop:Wynn-red}
	Assume that the non-commutative discrete Schwarzian Kadomtsev--Petviashvili equation~\eqref{eq:dSKP} is subject to the constraint
	\begin{equation} \label{eq:red-SKP-W}
	R_{m+1,n+1,p} = R_{m,n,p+1},
	\end{equation}
	then the equation reduces to the non-commutative Wynn recurrence~\eqref{eq:nc-W}.
\end{Prop}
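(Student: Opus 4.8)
The plan is to impose the constraint \eqref{eq:red-SKP-W} on \eqref{eq:dSKP} in two steps: first use it to collapse the three discrete variables to two, and then recognise the resulting two-variable lattice equation as \eqref{eq:nc-W} by a short skew-field manipulation.

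First I would observe that \eqref{eq:red-SKP-W}, read as invariance of $R$ under the shift $(m,n,p)\mapsto(m+1,n+1,p-1)$ of $\ZZ^3$, forces $R_{m,n,p}=r_{m+p,\,n+p}$, where $r\colon\ZZ^2\to\DD$ is defined by $r_{i,j}=R_{i,j,0}$; conversely every $R$ of this form satisfies \eqref{eq:red-SKP-W}. Substituting $R_{m,n,p}=r_{m+p,n+p}$ into the six vertices occurring in \eqref{eq:dSKP} and setting $i=m+p+1$, $j=n+p+1$, one finds that the two points $R_{m,n,p+1}$ and $R_{m+1,n+1,p}$ both become $r_{i,j}$ — which is precisely the content of \eqref{eq:red-SKP-W} — while $R_{m,n+1,p+1}=r_{i,j+1}$, $R_{m,n+1,p}=r_{i-1,j}$, $R_{m+1,n,p+1}=r_{i+1,j}$ and $R_{m+1,n,p}=r_{i,j-1}$. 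Hence \eqref{eq:dSKP} collapses to the five-point star equation
\[
(r_{i,j+1}-r_{i,j})(r_{i,j+1}-r_{i-1,j})^{-1}(r_{i,j}-r_{i-1,j})=(r_{i+1,j}-r_{i,j})(r_{i+1,j}-r_{i,j-1})^{-1}(r_{i,j}-r_{i,j-1}).
\]

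Next I would set $a=r_{i+1,j}-r_{i,j}$, $b=r_{i-1,j}-r_{i,j}$, $c=r_{i,j+1}-r_{i,j}$, $d=r_{i,j-1}-r_{i,j}$, all assumed invertible as is customary for such equations. Using $r_{i,j+1}-r_{i-1,j}=c-b$, $r_{i,j}-r_{i-1,j}=-b$, $r_{i+1,j}-r_{i,j-1}=a-d$, $r_{i,j}-r_{i,j-1}=-d$, the star equation reads $c(c-b)^{-1}b=a(a-d)^{-1}d$. The elementary identity $b^{-1}-c^{-1}=b^{-1}(c-b)c^{-1}$, valid in any skew field, gives $c(c-b)^{-1}b=(b^{-1}-c^{-1})^{-1}$, and likewise $a(a-d)^{-1}d=(d^{-1}-a^{-1})^{-1}$; therefore the equation is equivalent to $b^{-1}-c^{-1}=d^{-1}-a^{-1}$, i.e.\ to $a^{-1}+b^{-1}=c^{-1}+d^{-1}$, which after resubstituting $a,b,c,d$ and relabelling $(i,j)$ as $(m,n)$ is exactly the non-commutative Wynn recurrence \eqref{eq:nc-W}.

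The argument is mostly bookkeeping together with the single skew-field identity above, so I do not anticipate a genuine obstacle. The steps demanding care are the correct relabelling of the six cube vertices after imposing \eqref{eq:red-SKP-W}, so that the surviving stencil is indeed the star centred at $(i,j)$, and keeping the order of multiplication straight when passing between the ``multiplicative'' form inherited from \eqref{eq:dSKP} and the ``additive'' form \eqref{eq:nc-W}; as elsewhere in the paper, the manipulation tacitly presumes the relevant differences to be invertible.
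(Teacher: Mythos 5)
Your proposal is correct and follows essentially the same route as the paper: impose the constraint to collapse $R_{m,n,p}$ to a two-variable function, obtain the five-point star equation centred at one lattice site, and convert the triple-product form into a sum of two inverses on each side (your identity $b^{-1}-c^{-1}=b^{-1}(c-b)c^{-1}$ is exactly the ``natural cancellation'' the paper invokes after writing the middle factor as a difference). Your explicit parametrization $R_{m,n,p}=r_{m+p,\,n+p}$ is a slightly more careful version of the paper's ``replace the shift in the third variable by simultaneous shifts in the first and second,'' but it is the same argument.
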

\begin{proof}
	Because of the reduction condition the function $R_{m,n,p}$ becomes effectively the function $R_{m,n}$ of two discrete variables.
	Replacing the shift in the third variable in equation~\eqref{eq:dSKP} by simultaneous shifts on the first and second variables we get
	\begin{equation*} \begin{split}
	(R_{m+1,n+1} - R_{m+1,n})^{-1} [
	(R_{m+2,n+1} - R_{m+1,n+1} )- (R_{m+1,n} - R_{m+1,n+1})] 
	(R_{m+2,n+1} - R_{m+1,n+1})^{-1} = \\ 
(R_{m+1,n+1} - R_{m,n+1})^{-1} [
(R_{m+1,n+2} - R_{m+1,n+1} )- (R_{m,n+1} - R_{m+1,n+1})] 
(R_{m+1,n+2} - R_{m+1,n+1})^{-1}  ,
	\end{split}
	\end{equation*}
	which after natural cancellations gives shifted recurrence \eqref{eq:nc-W}.
\end{proof}

\begin{Rem}
	As it was shown in~\cite{DoliwaKosiorek} the non-commutative discrete Schwarzian KP equation can be interpreted as a relation between six points (called the quadrangular set) of the projective line visualized in Figure~\ref{fig:R-geometry-123}. The geometric meaning of the Wynn recurrence described in Proposition~\ref{prop:Wynn-geom} follows then by the application of the reduction condition~\eqref{eq:red-SKP-W}.
\end{Rem}

\begin{figure}[h!]
	\begin{center}
		\includegraphics[width=9cm]{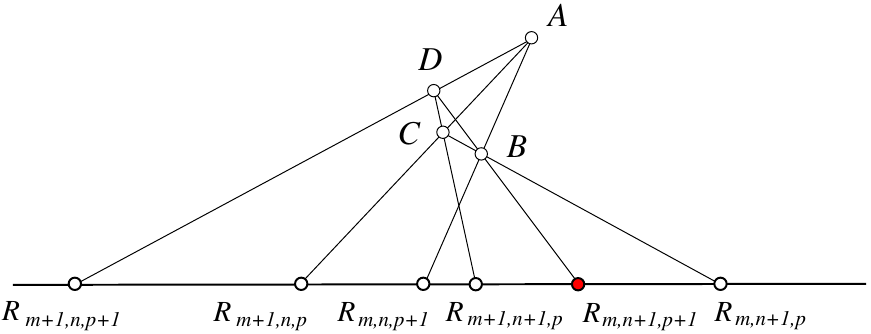}
	\end{center}
	\caption{Geometric meaning of the non-commutative discrete Schwarzian Kadomtsev--Petviashvili equation}
	\label{fig:R-geometry-123}
\end{figure}
\begin{Rem}
	The reduction condition~\eqref{eq:red-SKP-W} expresses invariance to the corresponding solution of the discrete Schwarzian KP equation~\eqref{eq:dSKP} with respect to a special translational symmetry of the equation.
\end{Rem}

\section{Wynn recurrence, circle packings and discrete analytic functions} \label{sec:W-circle}
Geometric interpretation of the \emph{complex} dSKP equation \eqref{eq:dSKP} was first presented in \cite{KoSchief-Men} in the context of the Menelaus theorem and of the Clifford configuration of circles in inversive geometry. One can find there also an equation equivalent to the Wynn recurrence obtained by application of the reduction condition~\eqref{eq:red-SKP-W}. Let us recall their results from our perspective~\cite{DoliwaKosiorek}. In studying the corresponding initial boundary value problem we discuss also geometric meaning of the compatibility of the relevant construction.

In this Section we study the Wynn recurrence in the complex projective line. Such a line, called in this context also the Riemann sphere or the conformal plane, has an additional structure which comes from the standard embedding of the field of real numbers in the complex numbers. The images of the real line under complex-homographic maps are circles or straight lines. By identifying the complex projective line with the (complex) plane supplemented by the infinity point, we identify the special circles passing through that point with straight lines. Homographic transformations preserve the structure (including the angles between circles) and may exchange the infinity point with ordinary ones. In particular, parallel lines are tangent at infinity. 

The conformal plane construction of the point $R_{m,n+1,p+1}$ with the points $R_{m+1,n,p}$, $R_{m,n+1,p}$, $R_{m,n,p+1}$, $R_{m+1,n+1,p}$ and $R_{m+1,n,p+1}$ given reads as follows (see Figure~\ref{fig:Chain-Veblen}):
\begin{itemize}
	\item by $I$ denote the intersection point of the circle passing through the points $R_{m+1,n,p}$, $R_{m,n+1,p}$, and $R_{m+1,n+1,p}$ with the circle passing through the points $R_{m+1,n,p}$, $R_{m,n,p+1}$, and $R_{m+1,n,p+1}$;
	\item the point $R_{m,n+1,p+1}$ is the intersection of the circle passing through the points $I$, $R_{m,n+1,p}$, and $R_{m,n,p+1}$ with the circle passing through the points $I$, $R_{m+1,n+1,p}$, and $R_{m+1,n,p+1}$.
\end{itemize}

\begin{figure}[h!]
	\begin{center}
	\includegraphics[width=12cm]{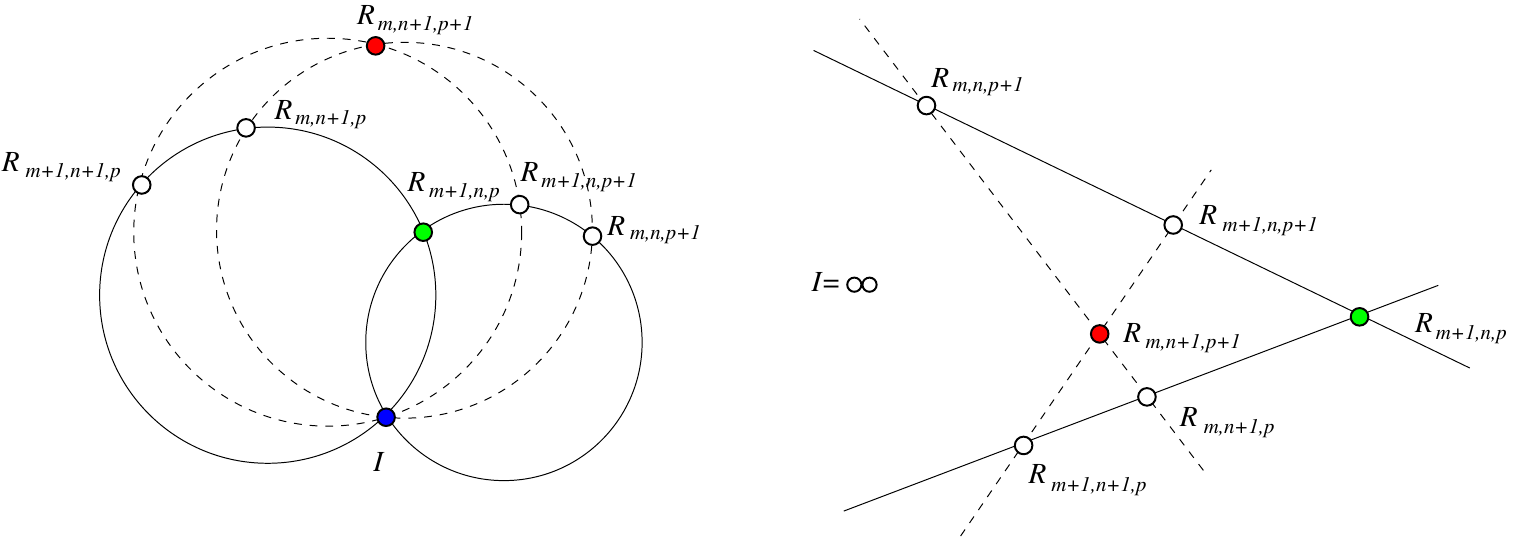}
	\end{center}
	\caption{Circle geometry description of the complex discrete Schwarzian KP equation} 
	\label{fig:Chain-Veblen}
\end{figure} 
\begin{Rem}
	The proof \cite{KoSchief-Men} makes use of the invariance of the discrete Schwarzian KP equation  with respect to the homographies and then, after shifting the intersection point $I$ to the infinity, follows by application of the celebrated Menelaus theorem of the affine geometry~\cite{Coxeter-GR}. 
\end{Rem}

Application of the reduction condition~\eqref{eq:red-SKP-W} forces also the identification $I=R_{m+1,n+1,p}=R_{m,n,p+1}$, where we assume that no other coincidence among the initial points appears. This in turn implies double contact (tangency) of the two pairs of circles in the distinguished point:
\begin{itemize}
	\item the circle passing through the points $R_{m+1,n,p}$, $R_{m,n+1,p}$, and $R_{m+1,n+1,p}$ is tangent at $I$ to the circle passing through the points $I=R_{m+1,n+1,p}$, and $R_{m+1,n,p+1}$,
	\item the circle passing through the points $R_{m+1,n,p}$, $R_{m,n,p+1}$, and $R_{m+1,n,p+1}$ is tangent at $I$ to the circle passing through the points $I=R_{m,n,p+1}$ and $R_{m,n+1,p}$.
\end{itemize} 
The reduced configuration of circles after transition to variables $m,n$ and overall shift (see the proof of Proposition~\ref{prop:Wynn-red}) is visualized in Figure~\ref{fig:Wynn-circles-P}. When the distinguished point $I=R_{m,n}$ is moved to infinity then the pairs of tangent circles become two pairs of parallel lines, and the points $R_{m\pm1,n}, R_{m,n\pm1}$ become vertices of the corresponding parallelogram. This point of view on the two-parameter families of pairwise tangent circles (the so called P-nets) in relation to discrete integrable equations was considered in~\cite{BobenkoPinkall-DSIS}. When the circles intersect orthogonally (all the parallelograms are rectangles) and half of them is removed after the construction (see Figure~\ref{fig:Wynn-circle-init}) then the remaining systems of tangent circles is that considered by Schramm~\cite{Schramm} in the context of discrete complex analysis.

\begin{Rem}
The contemporary interest in circle packings was initiated by Thurston’s rediscovery of the Koebe-Andreev theorem \cite{Koebe} about circle packing realizations of cell complexes of a prescribed combinatorics and by his idea about
approximating the Riemann mapping by circle packings, see \cite{Thurston,Stephenson}. These results
demonstrate surprisingly close analogy to the classical theory and allow one to talk about an
emerging of the ”discrete analytic function theory”, containing the classical theory of analytic
functions as a small circles limit.
Circle patterns with the combinatorics of
the square grid introduced by Schramm~\cite{Schramm}
result in an analytic description, which is closer to the Cauchy-Riemann equations of complex analysis. 
In \cite{BobenkoPinkall-DSIS} the description of Schramm’s square grid circle patterns in conformal setting to an integrable system of Toda type is given. In the same paper it was found that such a system describes a generalization of the Schramm circle patterns, called the P-nets, i.e. discrete conformal maps with the parallelogram property. 
It should be mentioned that the description of the Schramm circle packings in terms of the complex Wynn recurrence can be found in \cite{BobenkoHoffmann}, see equation (15) of the paper, although without any association to the Pad\'{e} theory.
\end{Rem}

\begin{figure}[h!]
	\begin{center}
		\includegraphics[width=12cm]{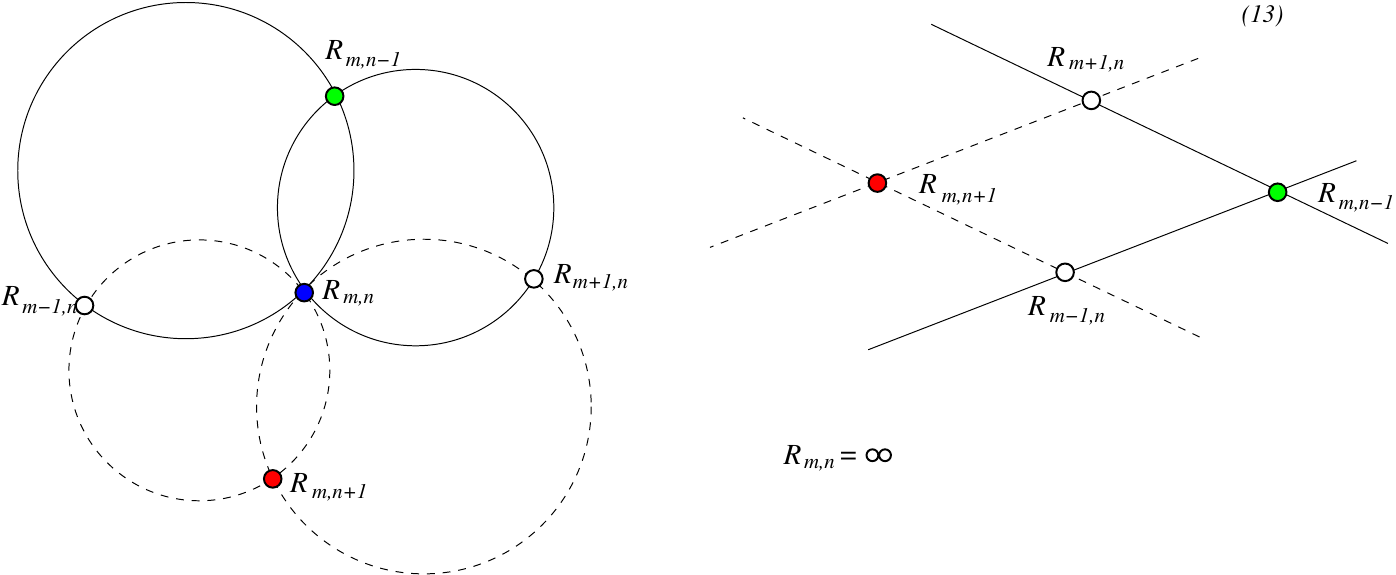}
	\end{center}
	\caption{Two tangent pairs of circles as P-configuration} 
	\label{fig:Wynn-circles-P}
\end{figure} 
The possibility of using the additional geometric (conformal) structure results in the reduction of initial boundary data with respect to the generic Wynn recurrence, as discussed in Section~\ref{sec:Wynn-derivation}. In such reduction it is enough to know $R_{m,0}$, $m\in\mathbb{Z}$, and $R_{0,1}$. Apart from the two "initial" circles: that passing through the points $R_{0,0}$, $R_{1,0}$ and $R_{0,1}$, and  that passing through the points $R_{0,0}$, $R_{-1,0}$ and $R_{0,1}$, the other circles are constructed from two points, with prescribed tangency in one of them. Notice that the description of the lattice states that also at the second point the tangency is required. Such a compatibility of the construction is ensured by the following geometric result. 

\begin{Th}[Tangential Miquel theorem]
	Given four points $P_1, P_2, P_3$ and $P_4$ on the circle $\mathcal{C}$. Let 
	\begin{itemize}
		\item $\mathcal{C}_1$ be a circle passing through $P_1$ and $P_2$,
		\item $\mathcal{C}_2$ the circle tangent to $\mathcal{C}_1$ at $P_2$ and passing through $P_3$,
		\item $\mathcal{C}_3$ the circle tangent to $\mathcal{C}_2$ at $P_3$ and passing through $P_4$, 
		\item $\mathcal{C}_4$ the circle tangent to $\mathcal{C}_3$ at $P_4$ and passing through $P_1$.
	\end{itemize}  
Then the circle $\mathcal{C}_4$ is also tangent to $\mathcal{C}_1$ at $P_1$. 
\end{Th}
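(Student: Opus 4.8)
The plan is to recognize this as a degenerate limit of the classical Miquel theorem (the six-circle theorem for four points on a circle and four auxiliary points, one on each arc), obtained by letting each auxiliary point collide with one of the $P_i$. A clean and self-contained route, however, is to transport the whole configuration to a position where the computations trivialize by a conformal (homographic) transformation, exactly as was done in the proof of Proposition~\ref{prop:Wynn-geom} and in the cited proof of the Menelaus-type statement. Concretely, first I would apply a homography sending $P_1$ to $\infty$. Then $\mathcal{C}$ becomes a straight line $\ell$ through $P_2,P_3,P_4$; the circle $\mathcal{C}_1$ through $P_1,P_2$ becomes a straight line $\ell_1$ through $P_2$ (since it passes through $\infty$); and $\mathcal{C}_4$, passing through $P_1=\infty$ and $P_4$, becomes a straight line $\ell_4$ through $P_4$. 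The assertion ``$\mathcal{C}_4$ tangent to $\mathcal{C}_1$ at $P_1=\infty$'' becomes simply ``$\ell_4$ is parallel to $\ell_1$'', since two lines are tangent at the point at infinity precisely when they are parallel.

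With $P_1$ at infinity the remaining data are: points $P_2,P_3,P_4$ on a line $\ell$; a line $\ell_1$ through $P_2$ (arbitrary direction $\mathbf{u}$); the circle $\mathcal{C}_2$ tangent to $\ell_1$ at $P_2$ and passing through $P_3$; the circle $\mathcal{C}_3$ tangent to $\mathcal{C}_2$ at $P_3$ and passing through $P_4$; and the line $\ell_4$ through $P_4$ tangent to $\mathcal{C}_3$ at $P_4$. The goal is $\ell_4 \parallel \ell_1$. Here I would use the elementary fact that for a circle through two points $P,Q$ on a line $\ell$, the tangent lines at $P$ and at $Q$ make equal (opposite-sign) angles with $\ell$ — the tangent-chord angle equality — so the tangent direction at $Q$ is the reflection in $\ell$ of the tangent direction at $P$. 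Applying this to $\mathcal{C}_2$ (with chord $P_2P_3$ on $\ell$): the tangent direction of $\mathcal{C}_2$ at $P_3$ is the reflection in $\ell$ of $\mathbf{u}$; but that tangent direction at $P_3$ is shared by $\mathcal{C}_3$ by construction. Applying the same fact to $\mathcal{C}_3$ (with chord $P_3P_4$ on $\ell$): the tangent direction of $\mathcal{C}_3$ at $P_4$ is the reflection in $\ell$ of its tangent direction at $P_3$, i.e.\ the reflection in $\ell$ of (the reflection in $\ell$ of $\mathbf{u}$) $=\mathbf{u}$. Hence $\ell_4$ has direction $\mathbf{u}$, so $\ell_4 \parallel \ell_1$, which is the claim.

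The two ingredients to justify carefully are (a) the conformal invariance of the hypotheses and conclusion — that homographies map circles/lines to circles/lines, preserve tangency, and send tangency at a point to parallelism when that point is moved to $\infty$ — and (b) the tangent-chord reflection lemma: for a circle meeting a line $\ell$ at $P$ and $Q$, the two tangent lines are mirror images of each other in $\ell$. Both are classical; (a) is recalled in the prose preceding the theorem and (b) is the inscribed-angle/tangent-chord theorem, so neither requires real work. A subtlety worth a remark is the non-generic coincidence: the statement implicitly assumes the four points are distinct and the successive circles are genuinely distinct from $\ell$ (i.e.\ the tangent directions are transverse to $\mathcal{C}$), so that ``tangent at $P_i$'' determines each circle uniquely; under these genericity assumptions the argument above is complete.

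I expect the only real obstacle to be bookkeeping of orientation/signs in the reflection argument: one must be consistent about whether ``tangent at $P$'' is compared to $\ell$ as an unoriented line (in which case the two reflections compose to the identity automatically) or whether one tracks oriented angles mod $2\pi$, where a cleaner formulation uses directed angles between lines mod $\pi$ and the statement ``reflection in $\ell$'' becomes ``negate the directed angle to $\ell$'', composing to the identity after two steps. Choosing the directed-angles-mod-$\pi$ formalism from the outset removes all case distinctions about which side of $\ell$ a circle lies on and makes the chain $\mathbf{u}\mapsto -\mathbf{u}\mapsto \mathbf{u}$ (in the angle-to-$\ell$ variable) immediate.
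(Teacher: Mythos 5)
Your proof is correct and follows exactly the route the paper indicates: the paper's entire proof is the remark that the statement is elementary once $P_1$ is sent to infinity, and your write-up (homography to move $P_1$ to $\infty$, tangency at $\infty$ becoming parallelism, and the tangent--chord reflection lemma applied twice along $\ell$) is precisely the intended argument, carried out in full.
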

\begin{figure}
	\begin{center}
		\includegraphics[width=12cm]{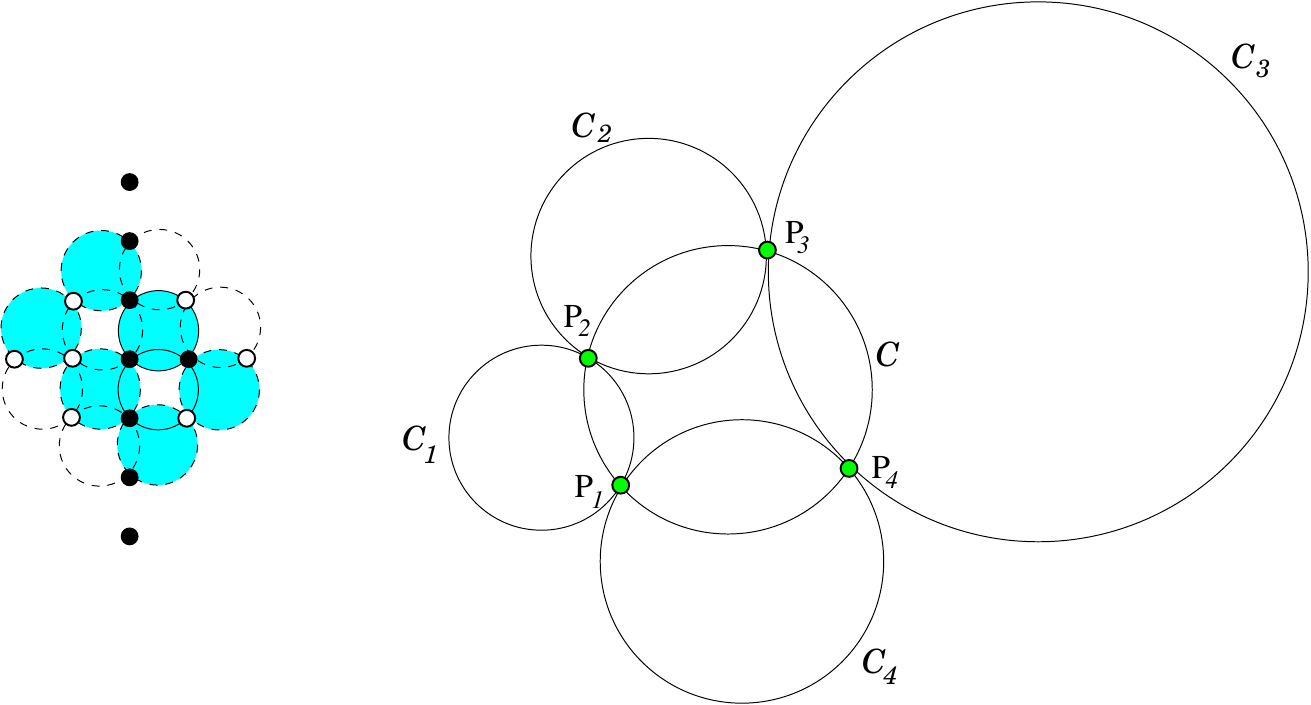}
	\end{center}
	\caption{Construction of the circle lattice from the initial data represented by the black points (on the left) and the tangential Miquel theorem implying consistency of the construction (on the right)} 
	\label{fig:Wynn-circle-init}
\end{figure} 
The proof is elementary, but it helps to shift one of the points (for example $P_1$) to infinity.
\begin{Rem}
	The above result is a limiting version of the six-circles Miquel theorem~\cite{Herzer} which was used in~\cite{CDS} to prove consistency of the circular reduction with the geometric integrable construction of the multidimensional lattice of planar quadrilaterals~\cite{MQL}.
\end{Rem}

\bibliographystyle{amsplain}

\section{Conclusion and open problems}
Motivated by the application of the discrete-time Toda chain equations in the theory of Pad\'{e} approximants, as one of the Frobenius identities, we studied the integrability of the Wynn recurrence. We investigated non-commutative version of the Pad\'{e} theory using quasideterminants. The proper form of the corresponding non-commutative discrete-time Toda chain equations and of their linear problem was obtained in close analogy with the standard~\cite{Baker,Gragg} determinantal approach to Frobenius identities, whose non-commutative versions were presented as well. We provided, on example of the Fibonacci language, the problem of rational approximations in the theory  of formal languages, where the non-commutativity of symbols that build the series is inherent and cannot be discarded.

After deriving the non-commutative Wynn recurrence from the linear problem of the discrete-time Toda chain equations, we gave also its second derivation as a reduction of the discrete non-commutative Schwarzian Kadomtsev--Petviashvili equation, which allowed to discover the geometric construction behind the Wynn recurrence valid for arbitrary skew field/division ring. It turns out that the same reduction in the complex field case~\cite{KoSchief-Men} gives the circle packings relevant in the theory of "discrete complex analysis"~\cite{Schramm}. It is therefore remarkable that two different approximation schemata of complex analytic functions: by rational functions and by "discrete analytic functions", are described by the same integrable equation. 

As we mentioned in the Introduction, in the literature there are known also other problems of numerical analysis related to integrable systems. In addition to looking for new examples of such a relationship, one can ask about the general reason explaining its existence.  
The editor of series Mathematics and Its Applications writes in~\cite{NNMRA}:
\emph{Rational or Pad\'{e} approximation [...] is still something of a mystery to this editor. Not the basic idea itself, which is lucid enough. But why is the technique so enormously efficient, and numerically useful, in so many fields ranging from physics to electrical engineering with continued fractions, orthogonal polynomials, and completely integrable systems tossed in for good measure.}

 Having in mind the special role played in the theory of integrable systems by Hirota's discrete KP equation~\cite{Hirota-2dT,KNS-rev,Zabrodin} one can start such an investigation by finding related problem which should contain existing examples as special cases. Because of the symmetry structure of the equation it is desirable that such problem would allow for arbitrary number of dimensions of discrete parameters. A good candidate is provided by the so called Hermite--Pad\'{e} approximation problem, and the research in this direction will be reported in a separate publication~\cite{Doliwa-Siemaszko-HP}. 

\subsection*{Acknowledgements} The authors would like to thank the reviewers for their constructive comments, which allowed for the improvement of the presentation of the results.

\subsection*{Data availability}
Data sharing not applicable to this article as no datasets were generated or analyzed during the current study.

\subsection*{Conflict of interest}
The authors declare that they have no conflict of interest.

\providecommand{\bysame}{\leavevmode\hbox to3em{\hrulefill}\thinspace}

\end{document}